\documentclass[preprint,11pt]{elsarticle}
\usepackage{amssymb}
\usepackage{url}
 \usepackage{listings}
\usepackage{xcolor}
\usepackage{hyperref}
\hypersetup{colorlinks,linkcolor={blue},citecolor={blue},urlcolor={red}}
\usepackage{amsmath}
\allowdisplaybreaks
\usepackage{minted}
\usepackage{algorithm}
\usepackage{color}
\usepackage[numbers]{natbib}
\usepackage{algpseudocode}
\usepackage{amsthm,setspace}

\newtheorem{theorem}{Theorem}
 \usepackage{lscape}
 \newtheorem{remark}{Remark}

\usepackage[margin=2.5cm]{geometry}
\usepackage{lineno}
\makeatletter
\def\ps@pprintTitle{%
 \let\@oddhead\@empty
 \let\@evenhead\@empty
 \def\@oddfoot{\reset@font\hfil\thepage\hfil}
 \let\@evenfoot\@oddfoot
}
\makeatletter
\begin{document}
\begin{frontmatter}

\title{A unified testing approach for log-symmetry using Fourier methods}

\author[label1]{Ganesh Vishnu Avhad}\ead{ma20d501@iittp.ac.in, avhadgv@gmail.com} 
\author[label3]{Sudheesh K. Kattumannil}  

\address[label1]{Department of Mathematics and Statistics, Indian Institute of Technology, Tirupati, India}
\address[label3]{Applied Statistics Unit, Indian Statistical Institute, Chennai, India}
  
\doublespacing
\begin{abstract}
Continuous and strictly positive data that exhibit skewness and outliers frequently arise in many applied disciplines. Log-symmetric distributions provide a flexible framework for modeling such data. In this article, we develop new goodness-of-fit tests for log-symmetric distributions based on a recent characterization. These tests utilize the characteristic function as a novel tool and are constructed using an $L^2$-type weighted distance measure. The asymptotic properties of the resulting test statistic are studied. The finite-sample performance of the proposed method is assessed via Monte Carlo simulations and compared with existing procedures. The results under a range of alternative distributions indicate superior empirical power, while the proposed test also exhibits substantial computational efficiency compared to existing methods. The methodology is further illustrated using real data sets to demonstrate practical applicability.
\end{abstract}

\begin{keyword}
 Characterization; empirical characteristic function; log-symmetric distribution; Monte Carlo simulation; warp-speed bootstrap.
\end{keyword}
\end{frontmatter}
  \doublespacing

\section{Introduction}\label{sec1}
\noindent Log-symmetric distributions provide a flexible framework for modeling positive and right-skewed data and have found applications in economics, reliability, environmental studies, survival analysis, and finance. By assuming symmetry on the logarithmic scale, these distributions can accommodate skewness and diverse tail behaviors that are not well captured by classical symmetric models. In regression settings, log-symmetric models offer a robust alternative when normality assumptions on residuals are violated \citep{vanegas2017log, medeiros2017small, cunha2022log}. They have also been successfully applied to duration modeling and high-frequency financial data \citep{saulo2017log}, as well as to survival analysis and econometrics \citep{rubio2016survival, leiva2023bootstrap}. In financial applications, log-symmetric distributions yield improved risk estimation and option pricing compared to traditional models, see McCulloch \cite{mcculloch1978pricing}, Klebaner and Landsman \cite{klebaner2009option} and Bernard {et al}. \cite{bernard2017value}. A comprehensive discussion of their theoretical properties and practical relevance can be found in Ahmadi and Balakrishnan \cite{ahmadi2024characterizations} and the references therein.

A characterization of a distribution family is a property that uniquely identifies that family. For general discussions on characterization-based methods, see Nikitin \cite{nikitin2017tests}, and for recent results on log-symmetric distributions, see Ahmadi and Balakrishnan \cite{ahmadi2024characterizations}. Such characterizations provide a natural foundation for constructing goodness-of-fit tests, as they rely on fundamental properties that uniquely define the underlying distribution and often lead to high efficiency and good power. Owing to their theoretical advantages, including parameter-free formulations that facilitate testing composite hypotheses, characterization-based goodness-of-fit tests have gained increasing attention in recent years and have been developed for several distributions, including class of symmetric distributions, Pareto, gamma, Maxwell-Boltzmann, exponential and several other distributions. Comprehensive discussions and examples of such tests can be found in Henze et al. \cite{Henze2012}, Jovanovi\'{c} {et al}. \cite{JMNO:2015}, Milo\v{s}evi\'{c} and Obradovi\'{c} \cite{milovsevic2016characterization},  Partlett and Patil \cite{partlett2017measuring}, Ejsmont {et al}. \cite{ejsmont2023test}, Avhad {et al}. \cite{avhad2024}, Rano\v{s}ov\'{a} and Hlubinka \cite{ranovsova2025symmetry} and Avhad and Ebner\cite{avhad2025efficient}. For recent developments and discussions of Fourier-based tests, we refer the reader to Allison and Pretorius \cite{allison2017monte} and Ndwandwe {et al}. \cite{ndwandwe2023new}, and the references therein.

Although accurate modeling of positively skewed data is essential in many applications, goodness-of-fit testing for log-symmetric distributions has received relatively limited attention. Only a few recent studies address this problem. For instance, Anjana and Kattumannil \cite{Anjana21072025} proposed a jackknife empirical likelihood ratio test based on probability weighted moments, but this method exhibits some limitations in small sample settings, as it fails to achieve the nominal significance levels. More recently, characterization-based empirical likelihood tests involving order statistics were developed in Avhad and Lahiri \cite{avhad2025jackknife} and Avhad {et al}. \cite{avhad2025family}, though these procedures are computationally demanding. {However, existing methods are constructed using integral-type functionals within $U$-statistic and empirical likelihood frameworks. In contrast, the present work introduces a class of goodness-of-fit tests based on a weighted $L^2$-distance between characteristic functions, motivated by a recent characterization of log-symmetry. To the best of our knowledge, no such $L^2$-distance-based test has been proposed for log-symmetric distributions. The Fourier-based approach yields explicit test statistics for common weight functions, is consistent against a wide range of alternatives under mild conditions, and avoids resampling and constrained optimization, making the method both theoretically tractable and computationally efficient.
}

The rest of the article is organized as follows. In Section \ref{sec2}, we present the log-symmetric distributional characterization theorem, and we propose a testing procedure for assessing the fit to the family of log-symmetric distributions. We also examine the theoretical results and asymptotic properties of the test statistics. In Section \ref{sec4}, we conduct extensive Monte Carlo simulation studies to evaluate the finite sample performance of the proposed methods. The practical applicability of the proposed methods is illustrated using real-life datasets in Section \ref{sec5}, followed by conclusions in Section \ref{sec6}.

\section{Test statistic based on a characterization}\label{sec2}
\noindent
 Let $X, X_1, X_2,\ldots, X_n$ be independent and identically distributed (i.i.d.) random variables with continuous distribution function (d.f.) $F_X$ and density function $f_X$ with support on \( \mathbb{R}^+ \). We consider the problem of testing the null hypothesis of log-symmetry  
\begin{equation}\label{hypo}
    H_0: F_X(x)= 1-F_X(1/x), \quad \text{for all} \quad x>0,
\end{equation}
against the general alternative. 

\subsection{Characterization of log-symmetry}
\noindent We consider the following characterization result of log-symmetric distributions established in Theorem 4.1 of Ahmadi and Balakrishnan \cite{ahmadi2024characterizations} to develop the test. The detailed proof is provided in the cited work. Denote \( X_{(1)}, X_{(2)}, \dots, X_{(n)} \) are the order statistics of a random sample of size $n$ drawn from $F_X$.  

\begin{theorem}[Ahmadi and Balakrishnan, 2024]\label{thm1}
 Let $X, X_1, \dots, X_n$  be a random sample from a distribution $F_X$  with support on $\mathbb{R}^+$. For a fixed $i$ where $1 \leq i \leq n/2$ and $n > 1$, if the statistics $X_{(i)}$ and  $1/X_{(n-i+1)}$  are identically distributed, then $X$ follows a log-symmetric distribution. 
\end{theorem}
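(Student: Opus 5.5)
The plan is to reduce the equality in law of $X_{(i)}$ and $1/X_{(n-i+1)}$ to an identity between two polynomials in $F$, and then use strict monotonicity of a single polynomial map on $[0,1]$. Write $F=F_X$, continuous, and $G(x)=1-F(1/x)$, which is the distribution function of $Y=1/X$. Note that $1/X_{(n-i+1)}$ is exactly the $i$-th order statistic $Y_{(i)}$ of the sample $Y_j=1/X_j$, because $t\mapsto 1/t$ is decreasing on $\mathbb{R}^+$. The hypothesis therefore says that the $i$-th order statistics of samples of size $n$ from $F$ and from $G$ have the same distribution.

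Next I would use the standard formula for the distribution function of the $i$-th order statistic from a continuous d.f.\ $H$:
\[
P(X_{(i)}\le x)=\sum_{k=i}^{n}\binom{n}{k}H(x)^k\{1-H(x)\}^{n-k}=B_{i,n}\bigl(H(x)\bigr).
\]
Here
\[
B_{i,n}(u)=\frac{n!}{(i-1)!(n-i)!}\int_0^u t^{i-1}(1-t)^{n-i}\,dt .
\]
Equality in law then gives $B_{i,n}(F(x))=B_{i,n}(G(x))$ for all $x>0$. The key observation is that $B_{i,n}$ is strictly increasing on $[0,1]$, since its derivative $t^{i-1}(1-t)^{n-i}$ is positive on $(0,1)$. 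Hence $B_{i,n}$ is injective on $[0,1]$, and we get $F(x)=G(x)=1-F(1/x)$ for every $x>0$. This is exactly the log-symmetry relation \eqref{hypo}, equivalently $\log X\stackrel{d}{=}-\log X$.

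I do not expect a serious obstacle. The only points needing care are the following.

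1. The identity $1/X_{(n-i+1)}=Y_{(i)}$ must be justified. It holds because the reciprocal map reverses the ordering, so the $(n-i+1)$-th smallest $X_j$ becomes the $i$-th smallest $1/X_j$.
2. Continuity of $F$ guarantees no ties. Since $G$ is then continuous as well, the binomial-sum formula applies to both $F$ and $G$.
3. The restriction $1\le i\le n/2$ is not actually needed for the argument. It only avoids redundancy, since the case $i$ and the case $n-i+1$ are equivalent statements.

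If continuity were dropped, the same binomial formula still holds at every $x$ for arbitrary d.f.s. Injectivity of $B_{i,n}$ then still gives $F(x)=G(x)$ pointwise, so the argument is robust.
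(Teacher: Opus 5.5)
Your proof is correct and complete. The paper gives no proof of this theorem; it defers to Ahmadi and Balakrishnan. Your argument is the standard uniqueness argument behind order-statistic characterizations of this kind:
\begin{itemize}
\item identify $1/X_{(n-i+1)}$ as the $i$-th order statistic of the reciprocal sample $1/X_1,\dots,1/X_n$;
\item then invert the strictly increasing map $u\mapsto B_{i,n}(u)$.
\end{itemize}
An equivalent route writes the d.f.\ of $1/X_{(n-i+1)}$ as $1-B_{n-i+1,n}\bigl(F(1/x)\bigr)=B_{i,n}\bigl(1-F(1/x)\bigr)$ via the reflection identity of the incomplete beta function. That route reaches the same injectivity step; your direct reindexing simply avoids the identity.

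One small precision concerns your closing remark on dropping continuity. In that case the d.f.\ of $1/X$ is $1-F\bigl((1/x)^-\bigr)$, not $1-F(1/x)$. So what you obtain there is $X\stackrel{d}{=}1/X$, rather than literally the relation \eqref{hypo}. Under the paper's standing continuity assumption the two coincide.
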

Based on the above characterization with $i=1$ and $n=k(<n)$, Avhad {et al}. \cite{avhad2025family} proposed a $U$-empirical process-based tests for log-symmetric distributions with the following distribution-free test statistic: 
\begin{align}\label{delta1}
  \Delta_n &= \int_{0}^\infty \big( K_n(t) - R_n(t)  \big) dF_n(t), 
\end{align}
where  
\begin{equation*}
    K_n(t) = \binom{n}{k}^{-1} \sum_{{\pi}_k}  {I}\big\{\min \{X_{i_1}, X_{i_2},\ldots, X_{i_k}\} \leq t \big\},  
\end{equation*}
\begin{equation*}
    R_n(t) = \binom{n}{k}^{-1} \sum_{{\pi}_k}  {I}\big\{\max \{X_{i_1},  X_{i_2},\ldots, X_{i_k}\} \geq 1/t \big\},  
\end{equation*}
where ${\pi}_k= \{(i_1, i_2,\ldots, i_k): 1\leq i_1< i_2<\ldots< i_k \leq n \}$ and ${I}(\cdot)$ denote the indicator function. Based on the asymptotic theory of $U$-statistics, Avhad {et al}. \cite{avhad2025family} derived the asymptotic distribution of $\Delta_n$. However, due to the complexity of deriving the asymptotic null variance of $\Delta_n$, they adopted a nonparametric simulated critical region approach to approximate the critical values. Subsequently, empirical likelihood-based tests were proposed. However, in the mentioned paper, the use of order statistics in constructing the test statistic leads to a substantial computational burden in the simulation study.
As an alternative, a more efficient test can be developed based on the same characterization by setting $i=1$ and using the characteristic function $\phi_X$ instead of the d.f. $F_X$.  Thus, we have the following result.   
\begin{theorem}\label{newthm1}
Let $X, X_1, X_2, \ldots, X_n$ be a non-negative i.i.d. random sample from a d.f. $F_X$. Then $X_{(1)}$ and $1/{X_{(n)}}$ are identically distributed if and only if $F_X$ is log-symmetric. 
\end{theorem}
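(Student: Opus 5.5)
The statement has two directions. The ``if'' direction is Theorem~\ref{thm1} with $i=1$: identical distribution of $X_{(1)}$ and $1/X_{(n)}$ forces $F_X$ to be log-symmetric, so nothing more is needed there. Still, I would give a short self-contained argument, because it is elementary and exposes the mechanism. By independence,
\begin{equation*}
P(X_{(1)}>t)=(1-F_X(t))^n,\qquad P(1/X_{(n)}>t)=P(X_{(n)}<1/t)=F_X(1/t)^n ,
\end{equation*}
where I use continuity of $F_X$ to identify $P(X<1/t)$ with $F_X(1/t)$. Equality of the two distributions for every $t>0$ then gives $(1-F_X(t))^n=F_X(1/t)^n$. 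Both bases lie in $[0,1]$ and the map $u\mapsto u^n$ is strictly increasing there, so we may take $n$-th roots. This yields $1-F_X(t)=F_X(1/t)$ for all $t>0$, which is exactly $H_0$ in \eqref{hypo}.

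For the ``only if'' direction, assume $F_X(x)=1-F_X(1/x)$ for all $x>0$. I would show directly that $X_{(1)}$ and $1/X_{(n)}$ have the same survival function. The same computation gives
\begin{equation*}
P(1/X_{(n)}>t)=F_X(1/t)^n=(1-F_X(t))^n=P(X_{(1)}>t),
\end{equation*}
so the two distribution functions coincide on $(0,\infty)$. Both variables are positive almost surely, since the support is $\mathbb{R}^+$, so the laws agree.

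An equivalent route works on the log scale. Log-symmetry is the same as $\log X \stackrel{d}{=} -\log X$. Hence $(X_1,\dots,X_n)\stackrel{d}{=}(1/X_1,\dots,1/X_n)$, and applying the measurable map ``minimum'' gives $X_{(1)}\stackrel{d}{=}\min_j(1/X_j)=1/X_{(n)}$. This version also explains the later use of characteristic functions: the statement is equivalent to $\phi_{\log X_{(1)}}=\phi_{-\log X_{(n)}}$.

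The only delicate point is boundary behaviour: strict versus non-strict inequalities at $1/t$, and whether $X$ can put mass at $0$. The phrase ``non-negative'' in the statement allows this in principle. I would handle it by invoking the standing continuity assumption on $F_X$ from Section~\ref{sec2}, which gives $P(X=1/t)=0$, together with support on $\mathbb{R}^+$, which excludes an atom at zero. I expect this bookkeeping to be the main, though minor, obstacle; the rest is the monotonicity of $u\mapsto u^n$.
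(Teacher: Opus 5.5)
Your proof is correct. The paper gives no argument of its own. It presents the result as the $i=1$ case of Theorem~\ref{thm1}, which supplies only the implication from $X_{(1)}\stackrel{d}{=}1/X_{(n)}$ to log-symmetry, and as stated there only for $n>1$; the converse is left implicit.

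Your survival-function computation, $P(X_{(1)}>t)=(1-F_X(t))^n$ and $P(1/X_{(n)}>t)=F_X(1/t)^n$, combined with injectivity of $u\mapsto u^n$ on $[0,1]$, proves both implications at once. It needs no citation and also covers $n=1$.

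What the citation gives in exchange is the general-$i$ statement. That case also follows from your monotonicity idea. Writing $I_u(a,b)$ for the regularized incomplete beta function, one has $P(X_{(i)}\le t)=I_{F_X(t)}(i,n-i+1)$ and, using continuity, $P(1/X_{(n-i+1)}\le t)=I_{1-F_X(1/t)}(i,n-i+1)$. Since $u\mapsto I_u(i,n-i+1)$ is strictly increasing on $[0,1]$, equality of the two laws forces $F_X(t)=1-F_X(1/t)$.

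One cosmetic slip: your labels are reversed. ``Identical distribution forces log-symmetry'' is the \emph{only if} half of the statement, and ``log-symmetry gives identical distribution'' is the \emph{if} half. Since you prove both halves, nothing is lost.
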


{Note that, Theorem \ref{thm1} provides a characterization of log-symmetry through the distributional equivalence of suitably chosen order statistics. To develop a tractable testing procedure based on this characterization, it is convenient to reformulate this distributional equality in an analytic form. One natural approach is to express the equality in distribution in terms of characteristic functions, which allows us to translate the order-statistic-based characterization into a Fourier-domain representation  \citep{allison2017monte,ebner2024cauchy}. This Fourier-based representation provides an equivalent but more convenient formulation for constructing the proposed test statistics.}


\subsection{Test construction}
\noindent Based on the characterization described in Theorem \ref{newthm1} and in the spirit of tests based on the characteristic function \citep{huvskova2012tests}, to test the hypothesis stated in \eqref{hypo}, we propose a class of tests based on a weighted $L^2$-distance measure as
\begin{equation}
     \Delta_{n,a}=  \int_{0}^\infty \Big|\phi_{X_{(1)}}(t)- \phi_{{1}/{X_{(n)}}}(t) \Big|^2 w_a(t)dt,
 \end{equation}
where 
\begin{equation*}
    \phi_{X_{(1)}}(t)= \mathbb{E}\!\left(e^{itX_{(1)}} \right)
    \quad \text{and} \quad
    \phi_{1/X_{(n)}}(t)= \mathbb{E}\!\left(e^{it/X_{(n)}} \right)
\end{equation*}
{denote the characteristic functions of $X_{(1)}$ and $1/X_{(n)}$, respectively} and $w_a(\cdot)$ is the weight function. Consider 
\begin{align*}
    D_n(t)&=  \mathbb{E}\Big(e^{itX_{(1)}} \Big)- \mathbb{E}\Big(e^{ {it}/{X_{(n)}}} \Big) \\
    &= n\int_{0}^\infty e^{itx} (1-F(x))^{n-1} dF(x)- n\int_{0}^\infty e^{it/x} F(x)^{n-1} dF(x).
\end{align*}
We propose the following test statistic to test the hypothesis:
\begin{equation}\label{Tna}
    T_{n,a}= n\int_{0}^\infty |\widehat{D}_n(t)|^2 w_a(t)dt, 
\end{equation}
 where
 \begin{small}
\begin{align*}
    \widehat{D}_n(t)&= \frac{1}{n}\sum_{m=1}^n \left(\frac{n-m}{n-1} \right)^{n-1} \left\{\cos(tX_{(m)})+i\sin (tX_{(m)})\right\}  - \frac{1}{n}\sum_{l=1}^n \left(\frac{l-1}{n-1} \right)^{n-1} \left\{\cos \left(\frac{t}{X_{(l)}}\right)+i\sin \left(\frac{t}{X_{(l)}}\right) \right\}\\
    &= \frac{1}{n}\sum_{m=1}^n \beta_{m,n}\left\{\cos(tX_{(m)})+i\sin (tX_{(m)})\right\}-\frac{1}{n}\sum_{l=1}^n \gamma_{l,n}\left\{\cos \Big(\frac{t}{X_{(l)}}\Big)+i\sin \left(\frac{t}{X_{(l)}}\right) \right\} \\
    & =: A_n(t) - B_n(t). 
\end{align*}
\end{small}
{Now, consider
\[
|\widehat{D}_n(t)|^2
= \left(A_n(t)-B_n(t)\right)\overline{\left(A_n(t)-B_n(t)\right)}
= |A_n(t)|^2 + |B_n(t)|^2 - 2\Re\left(A_n(t)\overline{B_n(t)}\right),
\]
where, $\Re(\cdot)$ denotes the real part of a complex number.
Using the identity
$\cos(a-b)=\cos (a) \cos (b) + \sin (a) \sin (b) $,
we obtain
\[
|A_n(t)|^2
= \frac{1}{n^2}\sum_{m=1}^n\sum_{l=1}^n
\beta_{m,n}\beta_{l,n}
\cos\big(t(X_{(m)}-X_{(l)})\big).
\]
Similarly,
\[
|B_n(t)|^2
= \frac{1}{n^2}\sum_{m=1}^n\sum_{l=1}^n
\gamma_{m,n}\gamma_{l,n}
\cos\Big(t\Big(\tfrac{1}{X_{(m)}}-\tfrac{1}{X_{(l)}}\Big)\Big), 
\]
and  
\[
\Re\{A_n(t)\overline{B_n(t)}\}
= \frac{1}{n^2}\sum_{m=1}^n\sum_{l=1}^n
\beta_{m,n}\gamma_{l,n}
\cos\Big(t\Big(X_{(m)}-\tfrac{1}{X_{(l)}}\Big)\Big).
\]
Substituting these expressions into the definition of $T_{n,a}$ and interchanging summation and integration yield
\begin{align}\label{Dn}
T_{n,a}
&= \frac{1}{n}\sum_{m=1}^n\sum_{l=1}^n \beta_{m,n}\beta_{l,n}
\int_0^\infty \cos\big(t(X_{(m)}-X_{(l)})\big) w_a(t)\,dt \nonumber\\
&\quad + \frac{1}{n}\sum_{m=1}^n\sum_{l=1}^n \gamma_{m,n}\gamma_{l,n}
\int_0^\infty \cos\Big(t\Big(\tfrac{1}{X_{(m)}}-\tfrac{1}{X_{(l)}}\Big)\Big) w_a(t)\,dt \nonumber\\
&\quad - \frac{2}{n}\sum_{m=1}^n\sum_{l=1}^n \beta_{m,n}\gamma_{l,n}
\int_0^\infty \cos\Big(t\Big(X_{(m)}-\tfrac{1}{X_{(l)}}\Big)\Big) w_a(t)\,dt,
\end{align}
and we define
\begin{equation*}
   {I}_{w,a}(b) = \int_{0}^{\infty} \cos(b t)w_a(t)dt.
\end{equation*}}
The selection of the weight function $w_a(\cdot)$ has been extensively explored in the literature. Two widely used forms are $w_a(t) = e^{-a|t|}$ and $w_a(t) = e^{-at^2}$. These functions correspond to kernel-based approaches, where $e^{-a|t|}$ represents a scaled Laplace kernel with bandwidth $1/a$, and $e^{-at^2}$ corresponds to a scaled Gaussian kernel with bandwidth $1/(a\sqrt{2})$. For these weight functions, the resulting integrals are given by
\begin{equation*}
 \int_{0}^{\infty} \cos(b t) e^{-a |t|} dt = \frac{a}{a^2 + b^2} \quad \text{and} \quad \int_{0}^{\infty} \cos(bt) e^{-at^2} dt = \frac{1}{2} \sqrt{\frac{\pi}{a}}\, e^{- \frac{b^2}{4a}}, \quad \text{for } a > 0,\, b \in \mathbb{R}.   
\end{equation*}
Hence, for $w_a(t) = e^{-a|t|}$, we obtain the test statistic  
\begin{align*}
T_{n,a}^{(1)} &= \dfrac{1}{n} \sum_{m=1}^{n}\sum_{l=1}^{n} \beta_{m,n} \beta_{l,n} \frac{a}{a^2 + (X_{(m)} - X_{(l)})^2}
+ \dfrac{1}{n} \sum_{m=1}^{n}\sum_{l=1}^{n}  \gamma_{m,n} \gamma_{l,n} \frac{a}{a^2 + \left(\frac{1}{X_{(m)}} - \frac{1}{X_{(l)}}\right)^2}  \\
&\hspace{0.5cm}- \dfrac{2}{n}\sum_{m=1}^{n}\sum_{l=1}^{n} \beta_{m,n} \gamma_{l,n} \frac{a}{a^2 + \left(X_{(m)} - \frac{1}{X_{(l)}}\right)^2}, 
\end{align*}  
and for $w_a(t) = e^{-at^2}$ the test statistic becomes
\begin{align*}
 T_{n,a}^{(2)} &= \dfrac{1}{n} \sqrt{\dfrac{\pi}{a}} \sum_{m=1}^{n}\sum_{l=1}^{n} \beta_{m,n} \beta_{l,n}  e^{-\frac{1}{4a}(X_{(m)} - X_{(l)})^2}
+ \dfrac{1}{n} \sqrt{\dfrac{\pi}{a}}\sum_{m=1}^{n}\sum_{l=1}^{n}  \gamma_{m,n} \gamma_{l,n} e^{-\frac{1}{4a}\left(\frac{1}{X_{(m)}} - \frac{1}{X_{(l)}}\right)^2}\\
&\hspace{0.5cm}- \dfrac{2}{n}\sqrt{\dfrac{\pi}{a}} \sum_{m=1}^{n}\sum_{l=1}^{n} \beta_{m,n} \gamma_{l,n}  e^{-\frac{1}{4a} \left(X_{(m)} - \frac{1}{X_{(l)}}\right)^2}.
\end{align*} 
As both tests $T_{n, a}^{(1)}$ and $T_{n, a}^{(2)}$ are constructed using a weighted $L^2$-distance measure, large values of these test statistics provide stronger evidence against the null hypothesis.
 
\subsection{Asymptotic theory}
\subsubsection{The limit distribution under the null hypothesis}\label{nulldist}
\noindent Under the null hypothesis that the data follow a log-symmetric distribution, we investigate the asymptotic behavior of $T_{n, a}$ as $n \to \infty$. Define the empirical processes
\begin{equation*}
\widehat{Z}_n(s) =\dfrac{1}{n} \sum_{j=1}^n \beta_{j,n} \cos(s X_{(j)}), \quad 
\widehat{Z}_n^*(s) = \dfrac{1}{n}\sum_{j=1}^n \gamma_{j,n} \cos\left( \frac{s}{X_{(j)}} \right), \quad s > 0.
\end{equation*}
Then, the statistic \eqref{Dn} can be expressed as
\begin{equation*}
T_{n,a} = \int_0^\infty \left( \widehat{Z}_n(s)^2 - \widehat{Z}_n^*(s)^2 \right) w_a(s) \, ds.
\end{equation*}   
The statistic $T_{n, a}$ is constructed as a weighted $L^2$-type functional based on the empirical characteristic function. The asymptotic property under the null distribution for such a class of statistics has been investigated in several previous studies; see e.g., Klar and Meintanis \cite{klar2005tests}, Baringhaus {et al}. \cite{baringhaus2017limit} and Ebner and Henze \cite{ebner2020tests}. {The asymptotic results for the test statistic $T_{n,a}$ depend on standard theory for empirical characteristic function-based processes. To ensure weak convergence and consistency, we impose the following mild regularity conditions:
\begin{enumerate}
    \item[\textbf{(C1)}] The random variables $X_1,\dots,X_n$ are i.i.d. and take strictly positive values.
    
    \item[\textbf{(C2)}] The logarithm of the observations, $Y_i = \log X_i$, has finite second moment, i.e., $\mathbb{E}[Y_i^2] < \infty$. 
    
    \item[\textbf{(C3)}] The weight function $w_a(t)$ is non-negative, integrable over $[0,\infty)$, and square-integrable, i.e.,
    \[
        \int_0^\infty w_a(t) \, dt < \infty \quad \text{and} \quad \int_0^\infty w_a^2(t) \, dt < \infty.
    \]
    
    \item[\textbf{(C4)}] The kernel function $I_{w,a}(\cdot)$ used to express $T_{n,a}$ satisfies
    \(\int_0^\infty I_{w,a}^2(b) \, db < \infty,\)
    which guarantees that the quadratic forms in the test statistic converge.
\end{enumerate}}
Under this regularity conditions, we have the following result. 
\begin{theorem}
Let $X_1, \ldots, X_n$ be i.i.d. random variables from a log-symmetric distribution. Then, under the null hypothesis, the test statistic $T_{n,a}$ satisfies
\begin{equation*}
T_{n,a} \xrightarrow{d} \int_0^\infty \bigl( Z(s) - Z^*(s) \bigr)^2 \, w_a(s)\, ds, 
\quad \text{as } n \to \infty,
\end{equation*}
{where $(Z(s), Z^*(s))_{s \ge 0}$ is a jointly Gaussian process with zero mean. The marginal processes $Z(s)$ and $Z^*(s)$ are identically distributed with covariance function $\mathbb{E}[Z(s)Z(t)] = K(s,t)$, and their joint covariance structure, including cross-covariances, is induced by the transformation $x \mapsto 1/x$ under the log-symmetry assumption.} The kernel $K(s,t)$ depends on the underlying log-symmetric distribution and the choice of the weight function through $I_{w,a}$.
\end{theorem}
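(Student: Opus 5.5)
Since $T_{n,a}=n\int_0^\infty|\widehat D_n(t)|^2w_a(t)\,dt$, the plan is to view $\sqrt n\,\widehat D_n$ as a random element of the Hilbert space $L^2_w=L^2([0,\infty),w_a(t)\,dt)$ (complex-valued, or equivalently its real and imaginary parts). We show weak convergence there and then apply the continuous mapping theorem to the norm-squared functional. The limit $\int_0^\infty(Z(s)-Z^*(s))^2w_a(s)\,ds$ is read with $Z,Z^*$ collecting the cosine and sine components, matching the decomposition $|A_n|^2+|B_n|^2-2\Re(A_n\overline{B_n})$ in \eqref{Dn}.

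\textbf{Step 1: representation.} The weights $\beta_{m,n}=((n-m)/(n-1))^{n-1}$ and $\gamma_{l,n}=((l-1)/(n-1))^{n-1}$ are the empirical versions of $n(1-F)^{n-1}$ and $nF^{n-1}$. Hence $A_n(t)$ and $B_n(t)$ are $L$-statistics estimating $\phi_{X_{(1)}}(t)$ and $\phi_{1/X_{(n)}}(t)$. Under $H_0$, Theorem \ref{newthm1} gives $D_n\equiv 0$. Because $X\stackrel{d}{=}1/X$, the map $x\mapsto1/x$ sends the order statistics $X_{(l)}$ to $1/X_{(n-l+1)}$, and $\gamma_{l,n}=\beta_{n-l+1,n}$. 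Consequently $B_n(t)=\frac1n\sum_j\beta_{j,n}e^{it/X_{(n-j+1)}}$, which is exactly $A_n$ computed from the sample $1/X_1,\dots,1/X_n$, and that sample has the same law as the original. This gives the marginal equality $Z\stackrel{d}{=}Z^*$. The cross-covariance is then induced by the coupling between $X_i$ and $1/X_i$ within the same sample.

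\textbf{Step 2: asymptotic linearization.} For fixed $t$, we write $\sqrt n(A_n(t)-\mathbb E A_n(t))=\frac1{\sqrt n}\sum_i h_{t}(X_i)+o_P(1)$, with an analogous representation $h_t^*$ for $B_n$. This uses either the Hájek projection, after writing $A_n$ as a $U$-statistic-type functional $\int e^{itx}J_n(F_n(x))\,dF_n(x)$, or standard $L$-statistic theory via the quantile process. Here $h_t$ comes from differentiating the functional $F\mapsto n\int e^{itx}(1-F)^{n-1}dF$, and condition (C2) supplies the moment control for $\log X$ needed on both the $X$ and $1/X$ scales. Finite-dimensional convergence of $(\sqrt n A_n,\sqrt n B_n)$ then follows from the multivariate CLT. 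The covariance is $K(s,t)=\mathrm{Cov}(h_s(X),h_t(X))$, and the cross terms are $\mathrm{Cov}(h_s(X),h_t^*(X))$.

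\textbf{Step 3: tightness in $L^2_w$.} This is the main obstacle. The difficulty is that the weights $\beta_{m,n},\gamma_{l,n}$ depend on $n$ and concentrate sharply near the extremes. As a result, the functional $F\mapsto n\int e^{itx}(1-F)^{n-1}dF$ is not a fixed smooth functional, and the centering $\mathbb E A_n$ versus $\phi_{X_{(1)}}$ has to be handled carefully. To get tightness we would first bound $\mathbb E\int|\sqrt n(\widehat D_n(t)-R_n(t))|^2w_a(t)\,dt\to0$, where $R_n$ is the linear term, using $|e^{itx}|\le1$ and the integrability in (C3). We would then use the characterization of tightness in a separable Hilbert space via uniform control of the tails of $\mathbb E\|\cdot\|^2$ over an orthonormal basis; (C4) ensures the limiting covariance operator is trace class. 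Once this holds, the continuous mapping theorem yields $T_{n,a}\xrightarrow{d}\|Z-Z^*\|^2_{L^2_w}$, which is the stated limit.
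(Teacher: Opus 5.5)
You follow the same route as the paper: write $T_{n,a}$ as a squared norm in $L^2_w$, prove weak convergence, then apply the continuous mapping theorem. Your Step~1 is correct and sharper than anything in the paper. Since $\gamma_{l,n}=\beta_{n-l+1,n}$ and $1/X_{(l)}$ is the $(n-l+1)$th order statistic of $1/X_1,\dots,1/X_n$, $B_n$ is $A_n$ evaluated at the reciprocal sample, so $\mathbb{E}\widehat D_n\equiv 0$ under $H_0$.

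\textbf{The gap is Step~2, and it is structural, not technical.} From $1-x\le e^{-x}$ you get $\beta_{m,n}=\bigl(1-\frac{m-1}{n-1}\bigr)^{n-1}\le e^{-(m-1)}$. Hence $\sum_m\beta_{m,n}=\sum_l\gamma_{l,n}\le e/(e-1)$, so $|\sqrt n\,\widehat D_n(t)|\le 2e/\{(e-1)\sqrt n\}$ for every $t$ and every sample, and
\[
T_{n,a}\le \frac1n\Bigl(\frac{2e}{e-1}\Bigr)^2\int_0^\infty w_a(t)\,dt\longrightarrow 0
\]
surely, under $H_0$ and under every alternative. Consequences:
\begin{itemize}
\item every $h_t$ in your linearization must have zero variance;
\item the only limit compatible with the definitions is the degenerate one, $\int_0^\infty(Z-Z^*)^2w_a=0$;
\item there is no nontrivial covariance $\mathrm{Cov}(h_s(X),h_t(X))$ for (C2) and (C4) to control;
\item the tightness in Step~3, which you call the main obstacle, is trivial.
\end{itemize}
What hides this is a factor-$n$ slip in Step~1. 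Because of the prefactor $1/n$, $A_n(t)$ estimates $\phi_{X_{(1)}}(t)/n$, not $\phi_{X_{(1)}}(t)$.

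Rescaling does not rescue the approach. Since $\gamma_{n-k,n}=\beta_{k+1,n}\to e^{-k}$ with summable domination,
\[
n\widehat D_n(t)=\sum_{k=0}^{n-1}e^{-k}\bigl(e^{itX_{(k+1)}}-e^{it/X_{(n-k)}}\bigr)+o(1)
\]
uniformly in $t$ and in the sample. So the statistic is asymptotically a function of a bounded number of the smallest and largest observations. Your score $J_n(u)=n(1-u)^{n-1}$ collapses to a point mass at $u=0$. The $L$-statistic CLTs you cite require a fixed score, so they do not apply, and no fixed influence function $h_t$ exists.

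The H\'ajek projection fails for the same reason. For, say, $g(X_{(1)})$, its summands are non-negligible only when $X_i$ lies among the few smallest observations, an event of probability of order $1/n$. After any normalization that makes it nondegenerate, Lindeberg's condition fails, and the limit is of Poisson (extreme-value) type rather than Gaussian. Any nondegenerate limit of a normalized $T_{n,a}$ is therefore governed by the tails of $F$ at $0$ and $\infty$; moments of $\log X$ play no role. It is not of the form $\int(Z-Z^*)^2w_a$ with $(Z,Z^*)$ Gaussian.

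The paper's proof does not supply the missing step either. It simply asserts that the unscaled $\widehat Z_n,\widehat Z_n^*$ converge to independent Gaussian processes. It also writes $T_{n,a}=\|\widehat Z_n\|^2_{\mathbb{H}_a}-\|\widehat Z_n^*\|^2_{\mathbb{H}_a}$, which is inconsistent with \eqref{Dn} and with the stated limit. So it cannot be used to close your gap.
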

\begin{proof}
The result follows by expressing $T_{n, a}$ as a difference of two quadratic forms in Hilbert space 
\begin{equation*}
    T_{n,a} = \|\widehat{Z}_n\|_{\mathbb{H}_a}^2 - \|\widehat{Z}_n^*\|_{\mathbb{H}_a}^2,
\end{equation*}
where $\mathbb{H}_a = L^2_w(0, \infty)$ is a weighted $L^2$ space with norm
\begin{equation*}
\|f\|_{\mathbb{H}_a}^2 = \int_0^\infty f(s)^2 w_a(s) \, ds.
\end{equation*}
Under the null hypothesis of log-symmetry, the processes $\widehat{Z}_n$ and $\widehat{Z}_n^*$ converge weakly to identically and independently distributed standard Gaussian processes $Z$ and $Z^*$, respectively, in $\mathbb{H}_a$. The convergence in distribution of $T_{n, a}$ then follows from the continuous mapping theorem.
\end{proof}
 
The limiting distribution is centered and nondegenerate, depending on the properties of the underlying log-symmetric distribution and the choice of kernel and weight function. The asymptotic null distribution is not in closed form, and thus it is hardly useful to find the critical values of the test statistic $T_{n, a}$. In practice, the critical values of the test statistic can be effectively approximated using a resampling technique such as the warp-speed bootstrap.

\subsubsection{Consistency of the test}\label{cons}

{\noindent Under the alternative hypothesis that $X$ is positive and $\log X$ is not symmetric, i.e., $X \in \mathcal{A} = \{ X>0 : \log X \text{ is not symmetric} \}$,} we present the limiting distribution of $T_{n, a}$ in the following result. 

\begin{theorem}
Under the alternatives stated above, we have 
 \begin{equation*}
     \dfrac{T_{n,a}}{n} \xrightarrow{p}\Delta_{n,a}:= \int_{0}^\infty \Big|\phi_{X_{(1)}}(t)- \phi_{\frac{1}{X_{(n)}}}(t) \Big|^2 w_a(t)dt,
 \end{equation*}
 as $n\to \infty$, with $\Delta_{n,a}=0$ if and only if $X$ follows log-symmetric distribution. 
\end{theorem}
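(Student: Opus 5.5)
The plan is to show that $T_{n,a}/n=\int_0^\infty|\widehat D_n(t)|^2w_a(t)\,dt$ differs from $\Delta_{n,a}=\int_0^\infty|D_n(t)|^2w_a(t)\,dt$ by a term that vanishes in probability, and then to read off the ``if and only if'' part from Theorem \ref{newthm1}. The natural tool is a representation of $\widehat D_n$ as an $L$-statistic. The weights $\beta_{m,n}=((n-m)/(n-1))^{n-1}$ and $\gamma_{l,n}=((l-1)/(n-1))^{n-1}$ are plug-in versions of the densities of the extremes. Indeed, $n(1-F(x))^{n-1}dF(x)$ and $nF(x)^{n-1}dF(x)$ are approximated by $n(1-F_n(x))^{n-1}dF_n(x)$ and $nF_n(x)^{n-1}dF_n(x)$, evaluated at the order statistics. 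Hence
\[
A_n(t)=\frac1n\sum_{m=1}^n \beta_{m,n}e^{itX_{(m)}}=\int_0^\infty e^{itx}\,J_n(F_n(x))\,dF_n(x)
\]
for a score function $J_n$, and $B_n(t)$ has the analogous form with $e^{it/x}$.

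\textbf{Step 1 (pointwise/uniform closeness).} For each fixed $t$, I would bound $|A_n(t)-\mathbb E e^{itX_{(1)}}|$ by comparing the discrete weights with $n(1-F)^{n-1}$. Since $|e^{itx}|\le1$, the difference is controlled by a total-variation type quantity between the empirical weighting and the true law of $X_{(1)}$, and this quantity must be shown to be $o_p(1)$. The same argument with the map $x\mapsto 1/x$ handles $B_n(t)$.

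\textbf{Step 2 (integration).} Both $|\widehat D_n(t)|$ and $|D_n(t)|$ are bounded uniformly in $t$: the sums of $\beta_{m,n}/n$ and of $\gamma_{l,n}/n$ are $O(1)$, since they are Riemann sums of $\int_0^1 n(1-u)^{n-1}du=1$. Therefore
\[
\bigl||\widehat D_n|^2-|D_n|^2\bigr|\le C\,|\widehat D_n-D_n|.
\]
By (C3), $w_a$ is integrable, so dominated convergence (in its in-probability form, e.g.\ via Fubini applied to $\mathbb E\int|\widehat D_n-D_n|\,w_a\,dt\to0$) gives $T_{n,a}/n-\Delta_{n,a}\xrightarrow{p}0$.

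\textbf{Step 3 (characterization).} We have $\Delta_{n,a}\ge0$. If $\Delta_{n,a}=0$, then $D_n(t)=0$ for $w_a$-almost every $t>0$. Because $w_a>0$ for the weights considered, continuity of characteristic functions gives $D_n\equiv0$ on $[0,\infty)$. Conjugate symmetry, $\phi(-t)=\overline{\phi(t)}$, extends this to all of $\mathbb R$, so $X_{(1)}\overset{d}{=}1/X_{(n)}$, and Theorem \ref{newthm1} yields log-symmetry. The converse is immediate from the same theorem. Consequently, under any alternative in $\mathcal A$ we get $\Delta_{n,a}>0$ and $T_{n,a}\to\infty$, which gives consistency.

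\textbf{Main obstacle.} The hard step is Step 1, because the target $D_n$ depends on $n$. The law of $X_{(1)}$ concentrates near the lower endpoint of the support, and the weights $\beta_{m,n}$ put mass essentially only on the first few order statistics. So $\widehat D_n-D_n$ does not obviously converge by a Glivenko–Cantelli argument. Equivalently, the statement is really a sequence-of-targets claim, and one has to show that the extremes are estimated consistently in law. I would first try an exact computation. Writing $U_{(m)}=F(X_{(m)})$, note that $\frac1n\sum_m \beta_{m,n}\delta_{X_{(m)}}$ has expectation close to the law of $X_{(1)}$, since $\beta_{m,n}\approx e^{-(m-1)n/(n-1)}$. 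One can then compare this weighting to a geometric mixture of the first few order statistics. If this fails, I would settle for the weaker, fixed-$k$ version, replacing $n$ by a fixed $k$ inside $\Delta$ as in \cite{avhad2025family}. That case is a standard U/V-statistic strong law under (C1)--(C3).
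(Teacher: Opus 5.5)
\textbf{There is a genuine gap: Step~1 is false for the weights in $T_{n,a}$, and with it the claimed convergence fails.} The error already shows in Step~2. There you treat $\frac1n\sum_m\beta_{m,n}$ as a discretization of $\int_0^1 n(1-u)^{n-1}\,du=1$. In fact $\beta_{m,n}/n$ discretizes $(1-u)^{n-1}\,du$, whose integral is $1/n$. Precisely, since $\beta_{k+1,n}=(1-\tfrac{k}{n-1})^{n-1}\le e^{-k}$ and $\beta_{k+1,n}\to e^{-k}$,
\[
\frac1n\sum_{m=1}^n\beta_{m,n}=\frac1n\sum_{k=0}^{n-1}\Bigl(1-\frac{k}{n-1}\Bigr)^{n-1}\le\frac{e}{(e-1)\,n},
\]
and the same bound holds for the $\gamma_{l,n}$. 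Hence $\sup_t|\widehat D_n(t)|\le 2e/((e-1)n)$ deterministically, so $T_{n,a}/n=O(n^{-2})$ and even $T_{n,a}=O(n^{-1})$. The measure $\frac1n\sum_m\beta_{m,n}\delta_{X_{(m)}}$ has total mass of order $1/n$, so it cannot approximate the law of $X_{(1)}$. Let $x_0$ and $y_0\in(0,\infty]$ be the left and right endpoints of the support. Then $X_{(1)}\to x_0$ and $1/X_{(n)}\to1/y_0$ in probability, so $\mathbb{E}e^{itX_{(1)}}\to e^{itx_0}$. Consequently $|A_n(t)-\mathbb{E}e^{itX_{(1)}}|\to1$ for every $F$ and every $t$, including under the null.

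Removing the factor $1/n$ does not rescue the argument. The weights then have total mass $e/(e-1)\neq1$, because $n(1-u)^{n-1}$ varies on the same $1/n$ scale as the grid, and $A_n(t)\to\frac{e}{e-1}e^{itx_0}$. (The exact plug-in weights $(1-\tfrac{m-1}{n})^n-(1-\tfrac{m}{n})^n$ do sum to one, but they are not the ones in $T_{n,a}$.) Meanwhile, by bounded convergence, $\Delta_{n,a}\to\int_0^\infty|e^{itx_0}-e^{it/y_0}|^2w_a(t)\,dt$. This limit is strictly positive for any Pareto alternative, where $x_0>0$ and $y_0=\infty$. There $T_{n,a}/n-\Delta_{n,a}\not\to0$, so no version of Step~1 can close the gap.

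Two further points. For alternatives with $x_0=0$ and $y_0=\infty$ (gamma, Weibull, $\chi^2$, \dots), both $T_{n,a}/n$ and $\Delta_{n,a}$ tend to $0$, so the convergence holds only trivially. Your closing consistency sentence also does not follow. Positivity of $\Delta_{n,a}$ for each fixed $n$ says nothing about $T_{n,a}=n\,(T_{n,a}/n)$; you would need $\liminf_n\Delta_{n,a}>0$. Moreover, the statistic as written satisfies $T_{n,a}=O(n^{-1})$, so it cannot diverge at all.

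Your Step~3 is correct. It is the right reading of the ``if and only if'' clause, which concerns $X_{(1)}$ and $X_{(n)}$ at a fixed $n$ and follows from Theorem~\ref{newthm1}. The fixed-$k$ fallback changes the statistic, so it does not prove the stated result.

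The paper's proof takes a different route. It replaces the $n$-dependent target by $\Delta_a=\int_0^\infty|\phi_X(t)-\phi_{1/X}(t)|^2w_a(t)\,dt$. It then asserts $\widehat D_n(t)\xrightarrow{p}\phi_X(t)-\phi_{1/X}(t)$, as if the weights were those of an ordinary empirical characteristic function, and obtains the equivalence from $X\overset{d}{=}1/X$. The computation above shows that $\widehat D_n(t)$ has no such limit: it tends to $0$, or to $\frac{e}{e-1}(e^{itx_0}-e^{it/y_0})$ without the factor $1/n$. So the paper's argument cannot be borrowed to fill your Step~1. Your identification of the $n$-dependence of the target as the real obstacle is more accurate than the paper's treatment.
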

\begin{proof}
As defined in \eqref{Tna}, the proposed test statistic is defined as
 \begin{equation*}
\dfrac{T_{n,a}}{n} = \int_{0}^{\infty} \left| \widehat{D}_n(t) \right|^2 w_a(t)\,dt,
\end{equation*}
where
 \begin{equation*}
\widehat{D}_n(t) = \sum_{m=1}^n \beta_{m,n} \left[ \cos(tX_{(m)}) + i \sin(tX_{(m)}) \right] - \sum_{l=1}^n \gamma_{l,n} \left[ \cos\left(\frac{t}{X_{(l)}}\right) + i \sin\left(\frac{t}{X_{(l)}}\right) \right],
\end{equation*}
with $\{z_{m,n}\}_{m=1}^n$ being deterministic weights that satisfy $\sum_{m=1}^n z_{m,n} = 1$.
Under regularity conditions (such as bounded support or suitable tail decay), and by standard results from empirical process theory, we have pointwise convergence in probability (Henze \cite{H:2024}, Theorem $6.17$)  
\begin{equation*}
\widehat{D}_n(t) \xrightarrow{p} D(t) := \phi_X(t) - \phi_{1/X}(t),
\end{equation*}
as $n \to \infty$.
Assuming the weight function $w_a(t)$ satisfies $\int_{0}^\infty |D(t)|^2 w_a(t)\,dt < \infty$ (which holds for both $w_a(t) = e^{-a|t|}$ and $w_a(t) = e^{-at^2}$), we apply the dominated convergence theorem to get
 \begin{equation*}
\frac{T_{n,a}}{n} = \int_{0}^{\infty} \left| \widehat{D}_n(t) \right|^2 w_a(t)\,dt \xrightarrow{p} \int_{0}^\infty \left| \phi_X(t) - \phi_{1/X}(t) \right|^2 w_a(t)\,dt =: \Delta_{a}.
\end{equation*}
 Under the alternative hypothesis, we have
\[
\frac{T_{n,a}}{n} \xrightarrow{p} \Delta_a > 0.
\]
Consequently, 
\[
T_{n,a} = n \Delta_a + o_p(n) \to \infty \quad \text{in probability as } n \to \infty,
\]
which confirms that the proposed test consistently detects departures from log-symmetry. To complete the proof, note that $\phi_X(t) = \phi_{1/X}(t)$ for all $t$ if and only if $X \overset{d}{=} 1/X$, which characterizes log-symmetry. Consequently, $\Delta_a = 0$ if and only if $X$ is log-symmetric. Hence, the proposed test is consistent. 
\end{proof}
{
\begin{remark}
The class of alternatives consists of all positive distributions for which $\log X$ is not symmetric about its median. Under such alternatives, the test statistic grows linearly with $n$, i.e., $T_{n,a} = n \Delta_a + o_p(n)$ with $\Delta_a > 0$, which ensures the consistency of the proposed test. Under the null hypothesis of log-symmetry, we have $\Delta_a = 0$, and $T_{n,a}$ converges in distribution to its limiting null distribution.
\end{remark}}

\section{Simulation study}\label{sec4}
\noindent In this section, we carried out a comprehensive Monte Carlo simulation study to assess the performance of the newly proposed methods $T_{n, a}^{(1)}$ and $T_{n, a}^{(2)}$ with different values of the tuning parameter $a$. All calculations were performed in \texttt{R} software \citep{team2020ra}. Given the lack of a closed-form expression for the null distribution of the proposed test statistics $T_{n, a}^{(1)}$ and $T_{n, a}^{(2)}$, the critical values are obtained through Monte Carlo simulation with 10000 replications. To improve computational efficiency, the warp-speed bootstrap approach proposed by Giacomini {et al}. \cite{giacomini2013warp} is used to estimate the empirical type I error rates and powers. 

The performance of these tests is compared with existing methods, such as recent methods introduced by Anjana and Kattumannil \cite{Anjana21072025}, Avhad and Lahiri \cite{avhad2025jackknife} and Avhad {et al}. \cite{avhad2025family}. Let $X_{(i)}$ denote the $i$-th order statistics from the random sample $X_1, X_2, \ldots, X_n$ drawn from $F_X$. The corresponding test statistics are defined as follows: 
\begin{enumerate}
    \item Anjana and Kattumannil \cite{Anjana21072025} proposed a jackknife empirical likelihood (JEL) ratio test ($\widehat{\Delta}^J_{W}$) based on the probability-weighted moments. The test statistic is given by
    \begin{equation*}
        \widehat{\Delta}_{W}= \binom{n}{\beta+1}^{-1}\dfrac{1}{\beta+1}\sum_{i=1}^n \bigg[\binom{i-1}{\beta}X_{(i)}- \binom{n-i}{\beta} \dfrac{1}{X_{(i)}}\bigg],
    \end{equation*}
    where $\beta=3$ is chosen for our simulation purposes.
    \item Avhad and Lahiri \cite{avhad2025jackknife} proposed a goodness-of-fit test based on the ratio of two consecutive order statistics and subsequently developed two empirical likelihood-based methods: the JEL ratio test, denoted by $\widehat{\Delta}^{J}_{R}$, and the adjusted JEL (AJEL) ratio test, denoted by $\widehat{\Delta}^{AJ}_{R}$. The test statistic is given by
    \begin{equation*}
        \widehat{\Delta}_{R}= \dfrac{1}{\binom{n}{4}}  \sum_{\substack{i,j,k,l=1 \\ i<j<k<l}}^n \left[{I}\left\{ \dfrac{X_{(2);X_i, X_j, X_k}}{X_{(1);X_i, X_j, X_k}} \leq X_l \right\} - {I}\left\{ \dfrac{X_{(3);X_i, X_j, X_k}}{X_{(2);X_i, X_j, X_k}}\leq X_l \right\}\right].
    \end{equation*}
    
\item Avhad {et al}. \cite{avhad2025family} proposed a goodness-of-fit test based on a characterization of log-symmetry and developed three additional tests: one test is constructed using the asymptotic critical region, denoted by $\widehat{\Delta}^{S}_{D}$, while the other two are based on empirical likelihood approach, such as JEL ratio test $(\widehat{\Delta}^{J}_{D})$ and the AJEL ratio test $(\widehat{\Delta}^{AJ}_{D})$. The $U$-statistic-based estimator is given by
\begin{align*}\label{In}
    \widehat{{\Delta}}_{D}& = \dfrac{1}{ \binom{n}{k+1}}  \sum_{\mathcal{L}_{k+1}}  \tau(X_{i_1}, X_{i_2},\ldots, X_{i_{k+1}}),
\end{align*}
with the associated symmetric kernel  
\begin{align*}
    \tau(X_1, X_2,\ldots, X_{k+1})
    &=\dfrac{1}{(k+1)!} \sum_{\pi \in \prod(k+1)} \Big[  {I}\big\{\min (X_{\pi(1)}, \ldots, X_{\pi(k)}) \leq X_{\pi({k+1})} \big\} \\
   &\hspace{3.3cm} - {I}\big\{\max (X_{\pi(1)}, \ldots, X_{\pi(k)}) \geq 1/X_{\pi({k+1})} \big\}\Big],
\end{align*}
where $\tau(\cdot)$ is the kernel function and $\prod(n)$ is the set of permutations of the set $\{1, 2,\ldots, n\}$. We implement the test for $k=3$ in order to obtain the numerical results presented.
\end{enumerate}

\noindent For each method, we computed the empirical type I error rate and the power of the tests at $5\%$ and $1\%$ significance levels, with sample sizes $n = 10$, $25$ and $50$. To find the empirical type I error rate, we drawn samples from several log-symmetric distributions, including the log-normal ${N_L}(0, 1)$, log-Laplace ${Lap_L}(0,1)$, log-logistic ${Log_L}(0, 1)$, log-t ${t_L}(5)$, and log-Cauchy ${C}_L(0,1)$ distributions. Empirical type I error rates are computed using the warp-speed bootstrap method, where samples are generated from the assumed log-symmetric distributions. The results of empirical type I error rates for different choices of the tuning parameter $a$ are reported in Table~\ref{tablep1}. From Table \ref{tablep1}, we see that for small sample sizes, the empirical type I error rates of all methods are reasonably close to the corresponding nominal significance levels, except the $\widehat{\Delta}_{W}^J$ test, which tends to deviate noticeably. The performance of the $\widehat{\Delta}_{W}^J$ test improves with increasing sample size, approaching the nominal level of significance in large-sample settings.

To assess the empirical powers, the alternative distributions considered were the gamma (G), inverse gamma (IG), chi-square $(\chi^2)$, Levy (L), Weibull (W), Maxwell (M), Pareto (P), inverse beta (IB), Benini (B), and tilted Pareto (TP) distributions. {This distributions were chosen as alternatives because they are common lifetime models whose log-transforms are asymmetric, thereby representing natural and practically relevant departures from log-symmetry.}  For empirical power calculations, bootstrap samples are drawn from the standard log-normal distribution. As the samples are generated under the null model, the particular choice of log-symmetric distribution has negligible influence on the inference. This observation is especially pertinent because the log-symmetric class extends the log-normal distribution and captures a wide spectrum of positively skewed distributions.
To evaluate performance, we present the empirical power results for all proposed tests under various choices of the tuning parameter: $a =0.5, 1, 1.5$ and $3$. Tables \ref{tablep2} and \ref{tablep3} contain these empirical power estimates. The results shown in Tables \ref{tablep2} and \ref{tablep3} indicate that the overall performance of the newly proposed tests demonstrates very good or at least comparable power with existing methods across nearly all alternative distributions.  A comparison of the power results for different values of the tuning parameter indicates that the test statistics $T_{n, a}^{(1)}$ and $T_{n, a}^{(2)}$ generally yield the highest empirical power across the considered alternatives. {However, based on the simulation results, we recommend using the test with a setting parameter $a = 1$, as it consistently demonstrates favorable performance in various scenarios. To clearly summarize the simulation results and improve readability, the maximum empirical power in each row of the tables is highlighted in bold.} For the inverse gamma and chi-square alternatives, the empirical likelihood-based tests exhibits better power performance compared to the proposed methods. It is also observed that the $\widehat{\Delta}_{W}^J$ test exhibits slightly better performance than the proposed tests under certain alternatives, making it a noteworthy competitor in those cases. Overall, for small sample sizes, the newly proposed tests demonstrate higher power performance and attain the nominal significance level compared to existing methods.

\begin{landscape} 
\begin{table}[!ht]
 \centering
  \caption{Empirical type I error rates at $5\%$ and $1\%$ level of significance }
    \resizebox{21 cm}{!}{
    \fontsize{8pt}{11pt}\selectfont
    \begin{tabular}{rrrrrrrrrrrrrrrrrrrrrrrrrrr}
    \hline
Alt. & $\alpha$& $n$& $T_{n,0.5}^{(1)}$ & $T_{n,1}^{(1)}$ & $T_{n,1.5}^{(1)}$ & $T_{n,3}^{(1)}$ &$T_{n,0.5}^{(2)}$ & $T_{n,1}^{(2)}$&$T_{n,1.5}^{(2)}$ & $T_{n,3}^{(2)}$&$\widehat{\Delta}^J_{W}$  & $\widehat{\Delta}^J_{R}$ & $\widehat{\Delta}^{AJ}_{R}$&  $\widehat{\Delta}^S_{D}$&$\widehat{\Delta}^J_{D}$& $\widehat{\Delta}^{AJ}_{D}$ \\[1.5ex]
 \hline  
$N_L(0,1)$&0.05&10&0.0526&0.0512&0.0518&0.0487&0.0505&0.0499&0.0495&0.0474&0.1420&0.0425&0.0436&0.0487&0.0512&0.0497\\[0.5ex] 
&&25&0.0531&0.0522&0.0522&0.0495&0.0547&0.0543&0.0532&0.0503&0.1161&0.0474&0.0430&0.0489&0.0465&0.0464 \\[0.5ex] 
&&50&0.0507&0.0483&0.0478&0.0465&0.0487&0.0461&0.0465&0.0470&0.0870&0.0474&0.0430&0.0489&0.0465&0.0464 \\[1.5ex]  
\hline 
&0.01&10&0.0099&0.0095&0.0086&0.0075&0.0097&0.0069&0.0077&0.0076&0.1180&0.0119&0.0106&0.0071&0.0112&0.0089   \\[0.5ex] 
&&25&0.0111&0.0105&0.0100&0.0090&0.0106&0.0097&0.0093&0.0095&0.0630&0.1088&0.0077&0.0111&0.0102& 0.0086 \\[0.5ex]    
&&50&0.0106&0.0084&0.0079&0.0087&0.0083&0.0083&0.0081&0.0076&0.0380&0.0835&0.0099&0.0071&0.0107& 0.0108 \\[1.5ex]   
\hline 
$Lap_L(0,1)$&0.05&10&0.0504&0.0474&0.0466&0.0442&0.0479&0.0473&0.0457&0.0451&0.1690&0.0504& 0.0518& 0.0499& 0.0516& 0.0450 \\[0.5ex]
&&25&0.0527&0.0555&0.0561&0.0544&0.0547&0.0550&0.0562&0.0552&0.1141&0.0505& 0.0488& 0.0452& 0.0510& 0.0435 \\[0.5ex]
&&50&0.0503&0.0506&0.0511&0.0494&0.0489&0.0493&0.0500&0.0507&0.0651&0.0497& 0.0452& 0.0456& 0.0437& 0.0438 \\[1.5ex] 
\hline     
&0.01&10&0.0067&0.0068&0.0066&0.0087&0.0065&0.0069&0.0081&0.0093&0.1890&0.0119& 0.0081& 0.0086& 0.0074& 0.0097 \\[0.5ex]
&&25&0.0079&0.0078&0.0084&0.0096&0.0080&0.0082&0.0086&0.0091&0.1171&0.0114&0.0087& 0.0114& 0.0096& 0.0085 \\[0.5ex]  
&&50&0.0097&0.0107&0.0110&0.0116&0.0094&0.0102&0.0113&0.0113&0.0540&0.0101&0.0108& 0.0115& 0.0113& 0.0085 \\[1.5ex] 
\hline  
$Log_L(0,1)$&0.05&10&0.0542&0.0530&0.0524&0.0533&0.0534&0.0517&0.0532&0.0523&0.1820&0.0508& 0.0482& 0.0514& 0.0445& 0.0462 \\[0.5ex] 
&&25&0.0451&0.0458&0.0453&0.0451&0.0483&0.0493&0.0473&0.0455&0.0810&0.0501& 0.0484& 0.0460& 0.0510& 0.0443 \\[0.5ex]
&&50&0.0511&0.0529&0.0546&0.0551&0.0514&0.0511&0.0510&0.0525&0.0671&0.0513& 0.0502& 0.0457& 0.0494& 0.0440 \\[1.5ex] 
\hline  
&0.01&10&0.0086&0.0087&0.0092&0.0101&0.0087&0.0092&0.0102&0.0102&0.1551&0.0111& 0.0092& 0.0109& 0.0112& 0.0077 \\[0.5ex] 
&&25&0.0128&0.0116&0.0116&0.0110&0.0121&0.0127&0.0120&0.0106&0.1090&0.0113& 0.0092& 0.0088& 0.0090& 0.0080 \\[0.5ex]  
&&50&0.0089&0.0085&0.0082&0.0078&0.0086&0.0085&0.0087&0.0085&0.0810&0.0121& 0.0097& 0.0110 &0.0072& 0.0083 \\[1.5ex] 
\hline
$t_L(5)$&0.05&10&0.0495&0.0499&0.0525&0.0515&0.0487&0.0512&0.0520&0.0516&0.2181&0.0507& 0.0456 & 0.0509& 0.0468& 0.0464 \\[0.5ex] 
&&25&0.0478&0.0494&0.0526&0.0515&0.0508&0.0517&0.0521&0.0530&0.1770&0.0516& 0.0476& 0.0515& 0.0454& 0.0501  \\[0.5ex] 
&&50&0.0490&0.0491&0.0462&0.0465&0.0478&0.0460&0.0493&0.0478&0.1201&0.0510& 0.0457& 0.0449& 0.0497& 0.0502 \\[1.5ex]
\hline  
&0.01&10&0.0094&0.0093&0.0108&0.0126&0.0098&0.0109&0.0125&0.0133&0.1220&0.0111& 0.0121& 0.0119&  0.0113& 0.0085 \\[0.5ex] 
&&25&0.0092&0.0083&0.0078&0.0091&0.0071&0.0082&0.0088&0.0099&0.1170&0.0110& 0.0116& 0.0081& 0.0097& 0.0083 \\[0.5ex]  
&&50&0.0100&0.0106&0.0117&0.0105&0.0095&0.0094&0.0099&0.0105&0.0860&0.0123& 0.0090& 0.0095& 0.0084& 0.0072 \\[1.5ex]
\hline 
$C_L(0,1)$&0.05&10&0.0455&0.0460&0.0496&0.0516&0.0481&0.0497&0.0506&0.0531& 0.1800&0.0489& 0.0465& 0.0464&  0.0430& 0.0442 \\[0.5ex] 
&&25&0.0471&0.0455&0.0470&0.0481&0.0490&0.0492&0.0486&0.0458&0.1280&0.0516& 0.0479& 0.0444& 0.0511& 0.0442 \\[0.5ex]  
&&50&0.0478&0.0473&0.0474&0.0460&0.0472&0.0455&0.0455&0.0458&0.1060&0.0515& 0.0467& 0.0498 &0.0497& 0.0494 \\[1.5ex]
\hline  
&0.01&10&0.0097&0.0096&0.0099&0.0105&0.0098&0.0098&0.0105&0.0107&0.1665&0.0092& 0.0083& 0.0078& 0.0097& 0.0082 \\[0.5ex]
&&25&0.0130&0.0128&0.0130&0.0131&0.0108&0.0107&0.0104&0.0129&0.1240&0.0110 &0.0107& 0.0079& 0.0079& 0.0104 \\[0.5ex]  
&&50&0.0112&0.0109&0.0111&0.0109&0.0115&0.0116&0.0116&0.0113&0.0910&0.0109& 0.0103 &0.0073& 0.0095& 0.0084 \\[1.5ex]  
\hline 
\end{tabular}%
}
\label{tablep1}%
\end{table}%
\end{landscape}
            
\begin{landscape} 
\begin{table}[!ht]
\centering
\caption{Empirical powers at $5\%$ and $1\%$ level of significance }
\resizebox{21 cm}{!}{
\fontsize{8pt}{11pt}\selectfont
\begin{tabular}{rrrrrrrrrrrrrrrrrrrrrrrrrrr}
\hline
Alt. & $\alpha$& $n$& $T_{n,0.5}^{(1)}$ & $T_{n,1}^{(1)}$ & $T_{n,1.5}^{(1)}$ & $T_{n,3}^{(1)}$ &$T_{n,0.5}^{(2)}$ & $T_{n,1}^{(2)}$&$T_{n,1.5}^{(2)}$ & $T_{n,3}^{(2)}$&$\widehat{\Delta}^J_{W}$  & $\widehat{\Delta}^J_{R}$ & $\widehat{\Delta}^{AJ}_{R}$&  $\widehat{\Delta}^S_{D}$&$\widehat{\Delta}^J_{D}$& $\widehat{\Delta}^{AJ}_{D}$ \\[1.5ex]
    \hline  
$G(2,1)$&0.05&10&0.2633&0.2729&0.3043&0.2728&0.2888&0.2432&0.2325&0.1945&\bf0.4970&0.2946&0.2144&0.2847&0.2411&0.2163   \\[0.5ex]
&&25&0.3539&\bf0.4814&0.4362&0.4162&0.4177&0.3657&0.4125&0.3846&0.4710&0.3273&0.3269&0.4341&0.4122&0.3890 \\[0.5ex]  
&&50&0.3572&0.5430&0.5253&\bf0.5912&0.5693&0.5738&0.5099&0.4740&0.5032&0.4802&0.4951&0.5699&0.3336&0.5085 \\[1.5ex] 
\hline  
&0.01&10&0.1394&0.1307&0.1237&0.0893&0.1168&0.0594&0.0568&0.0383&\bf0.2602&0.1451&0.1279&0.1955&0.1846&0.1439 \\[0.5ex]
&&25&0.1684&\bf0.3553&0.2248&0.2112&0.2398&0.2033&0.1751&0.1496&0.3414&0.1964&0.1667&0.2287&0.2100&0.1873 \\[0.5ex]
&&50&0.1745&\bf0.3948&0.2918&0.3561&0.3016&0.2874&0.3040&0.2751&0.3943&0.3167&0.2997&0.4394&0.3896&0.3168 \\[1.5ex]  
\hline  
$IG(2,1)$&0.05&10&0.2664&0.2853&0.3037&0.2613&0.2816&0.2293&0.2308&0.1925&\bf0.4731&0.2670&0.2355&0.4390&0.3724&0.3688 \\[0.5ex] 
&&25&0.3466&0.4805&0.4561&0.4382&0.4326&0.3822&0.3954&0.3677&0.4740&0.3946&0.3529&\bf0.5297&0.4840&0.4526\\[0.5ex]
&&50&0.3495&0.5275&0.5107&0.5825&0.5434&0.5433&0.5238&0.4935&0.5010&0.5277&0.4991&\bf0.8361&0.7456&0.6332\\[1.5ex]
\hline  
&0.01&10&0.1216&0.1178&0.1342&0.0964&0.1181&0.0585&0.0682&0.0486&0.3651&0.3281&0.2905&0.4508&\bf0.4512&0.4124\\[0.5ex]
&&25&0.1746&0.2752&0.2572&0.2239&0.2342&0.1949&0.1841&0.1537&0.3790&0.3619&0.3461&0.5980&\bf0.6172&0.5731 \\[0.5ex]
&&50&0.1685&0.3369&0.2592&0.3174&0.3562&0.3383&0.3013&0.2745&0.3741&0.6412&0.5477&\bf0.8867&0.8164&0.7102 \\[1.5ex] 
\hline    
$\chi^2(3)$&0.05&10&0.4354&0.4014&0.4144&0.2856&0.3303&0.2078&0.1256&0.1066&0.6587&0.1277&0.1551&\bf0.6017&0.4186&0.3850\\[0.5ex] 
&&25&0.5086&0.5859&0.5029&0.4343&0.3771&0.3144&0.1984&0.2063&\bf0.6620&0.2089&0.3718&0.7123&0.6142&0.5890 \\[0.5ex] 
&&50&0.4861&0.6258&0.4485&0.5337&0.3456&0.4302&0.2256&0.2731&0.7031&0.4974&0.4377&\bf0.9871&0.8897&0.8366 \\[1.5ex] 
\hline 
&0.01&10&0.2384&0.2198&0.2174&0.1397&0.1759&0.0636&0.0442&0.0187&\bf0.5531&0.2514&0.2105&0.3896&0.2215&0.1988\\[0.5ex]
&&25&0.3043&0.3567&0.3195&0.2336&0.2247&0.1692&0.0634&0.0452&0.5642&0.3140&0.4193&\bf0.7228&0.3461&0.2967\\[0.5ex]
&&50&0.2843&0.4244&0.2865&0.3163&0.2032&0.2312&0.1119&0.1402&0.6064&0.5446&0.5011&\bf0.9630&0.6413&0.5877\\[1.5ex]
\hline  
$L(2)$&0.05&10&\bf0.7812&0.6269&0.6541&0.4537&0.4833&0.3063&0.0795&0.0672&0.7743&0.4497&0.4092&0.6189&0.5986&0.5685\\[0.5ex] 
&&25&\bf0.9931&0.9648&0.9519&0.8238&0.7676&0.7033&0.2172&0.4334&0.9924&0.6121&0.5946&0.8645&0.7491&0.7367\\[0.5ex] &&50&\bf1.0000&\bf1.0000&0.9967&0.9890&0.9377&0.9596&0.5881&0.8621&\bf1.0000&0.9172&0.7793&0.9682&0.9549&0.8964\\[1.5ex] 
\hline    
&0.01&10&0.4856&0.3988&0.4188&0.2648&0.2979&0.1564&0.0325&0.0274&\bf0.6860&0.2845&0.2160&0.4183&0.3641&0.3150\\[0.5ex]
&&25&0.9095&0.7966&0.7944&0.5537&0.5378&0.4026&0.0355&0.0645&\bf0.9678&0.5611&0.5410&0.6978&0.6212&0.5679\\[0.5ex] 
&&50&0.9953&0.9958&0.9747&0.9156&0.7924&0.7848&0.2707&0.5956&\bf1.0000&0.8775&0.7594&0.9467&0.9211&0.8788\\[1.5ex]
\hline  
$W(2,1)$&0.05&10&0.3532&0.5987&0.5365&0.5847&0.6029&0.5752&0.5323&0.5007&0.6124&0.5478&0.4908&0.7044&\bf0.6609&0.5861 \\[0.5ex] 
&&25&0.9323&0.9992&0.9996&0.9994&\bf0.9997&0.9997&0.9607&0.9901&0.9041&0.7946&0.7455&0.9767&0.9588&0.8897\\[0.5ex]
&&50&0.9991&\bf1.0000&\bf1.0000&\bf1.0000&\bf1.0000&\bf1.0000&\bf1.0000&\bf1.0000&0.9921&0.9987&0.9673&\bf1.0000&0.9998&0.9768\\[1.5ex] 
\hline     
&0.01&10&0.0889&0.1981&0.1159&0.2429&0.1779&0.2552&0.2856&0.2952&0.4812&0.4182&0.3974&\bf0.5198&0.4866&0.4598\\[0.5ex] 
&&25&0.7598&0.9695&0.9677&\bf0.9743&0.9614&0.9362&0.7038&0.6783&0.7941&0.6511&0.5992&0.8340&0.7980&0.7609\\[0.5ex]
&&50&0.9905&\bf1.0000&\bf1.0000&\bf1.0000&\bf1.0000&\bf1.0000&0.9981&\bf1.0000&0.9681&0.8297&0.7948&0.9587&0.9453&0.9044\\[1.5ex]
\hline
    \end{tabular}%
    }
  \label{tablep2}%
\end{table}%
\end{landscape}

\begin{landscape} 
\begin{table}[!ht]
 \centering
  \caption{Empirical powers at $5\%$ and $1\%$ level of significance }
    \resizebox{21 cm}{!}{
    \fontsize{8pt}{11pt}\selectfont
    \begin{tabular}{rrrrrrrrrrrrrrrrrrrrrrrrrrr}
    \hline
   Alt. & $\alpha$& $n$& $T_{n,0.5}^{(1)}$ & $T_{n,1}^{(1)}$ & $T_{n,1.5}^{(1)}$ & $T_{n,3}^{(1)}$ &$T_{n,0.5}^{(2)}$ & $T_{n,1}^{(2)}$&$T_{n,1.5}^{(2)}$ & $T_{n,3}^{(2)}$&$\widehat{\Delta}^J_{W}$  & $\widehat{\Delta}^J_{R}$ & $\widehat{\Delta}^{AJ}_{R}$&  $\widehat{\Delta}^S_{D}$&$\widehat{\Delta}^J_{D}$& $\widehat{\Delta}^{AJ}_{D}$ \\[1.5ex]
   \hline  
$M(2)$&0.05&10&0.6904&0.6973&0.7387&0.6389&0.6975&0.5897&0.5632&0.4491&\bf0.8290&0.6410&0.5941&0.7455&0.7264&0.6972\\[0.5ex] 
&&25&0.8458&\bf0.9295&0.9091&0.8785&0.8776&0.8034&0.7945&0.7647&0.8761&0.8413&0.7746&0.8970&0.8164&0.7833 \\[0.5ex] 
&&50&0.8838&\bf0.9714&0.9581&0.9706&0.9489&0.9474&0.9094&0.8999&0.9034&0.9688&0.9128&0.9987&0.9712&0.9568\\[1.5ex] 
\hline   
&0.01&10&0.5083&0.4947&0.5093&0.4063&0.4722&0.3162&0.2692&0.1703&\bf0.7721&0.4487&0.4330&0.5696&0.5219& 0.4695\\[0.5ex] 
&&25&0.6564&0.7915&0.7713&0.6905&0.7423&0.6406&0.6118&0.5566&0.8012&0.7409&0.6866&\bf0.8456&0.8297&0.7891 \\[0.5ex]  
&&50&0.7294&\bf0.9179&0.8695&0.8907&0.8549&0.8398&0.7858&0.7504&0.8601&0.9132&0.8736&0.8609&0.8459&0.8270 \\[1.5ex]  
\hline 
$P(1)$&0.05&10&\bf1.0000&\bf1.0000&\bf1.0000&\bf1.0000&\bf1.0000&\bf1.0000&0.2647& 0.1434&\bf1.0000&0.7984&0.6980&0.9136&0.8571&0.8362 \\[0.5ex]
&&25&\bf1.0000&\bf1.0000&\bf1.0000&\bf1.0000&\bf1.0000&\bf1.0000&\bf1.0000&\bf1.0000&\bf1.0000&0.9761&0.9644&0.9780&0.9722&0.9455\\[0.5ex] 
&&50&\bf1.0000&\bf1.0000&\bf1.0000&\bf1.0000&\bf1.0000&\bf1.0000&\bf1.0000&\bf1.0000&\bf1.0000&\bf1.0000&\bf1.0000&\bf1.0000&\bf1.0000&0.9984\\[1.5ex] 
\hline   
&0.01&10&\bf1.0000&\bf1.0000&\bf1.0000&0.8119&\bf1.0000&0.2622&0.0864&0.0588&0.9981&0.7462&0.6897&0.9784&0.9551& 0.9422 \\[0.5ex] 
&&25&\bf1.0000&\bf1.0000&\bf1.0000&\bf1.0000&\bf1.0000&\bf1.0000&\bf1.0000&\bf1.0000&\bf1.0000&0.9465&0.9295&\bf1.0000&0.9960&0.9875 \\[0.5ex]
&&50&\bf1.0000&\bf1.0000&\bf1.0000&\bf1.0000&\bf1.0000&\bf1.0000&\bf1.0000&\bf1.0000&\bf1.0000&\bf1.0000&0.9982&\bf1.0000&\bf1.0000&\bf1.0000 \\[1.5ex] 
\hline 
$IB(0.6,1)$&0.05&10&\bf1.0000&\bf1.0000&\bf1.0000&\bf1.0000&\bf1.0000&\bf1.0000&0.3792& 0.1751&\bf1.0000&0.7844&0.5980&0.9894&0.9580&0.9367 \\[0.5ex] 
&&25&\bf1.0000&\bf1.0000&\bf1.0000&\bf1.0000&\bf1.0000&\bf1.0000&\bf1.0000&\bf1.0000&\bf1.0000&0.9896&0.9588&\bf1.0000&0.9986&0.9899 \\[0.5ex] 
&&50&\bf1.0000&\bf1.0000&\bf1.0000&\bf1.0000&\bf1.0000&\bf1.0000&\bf1.0000&\bf1.0000&\bf1.0000&\bf1.0000&\bf1.0000&\bf1.0000&\bf1.0000&\bf1.0000 \\[1.5ex] 
\hline   
&0.01&10&\bf1.0000&\bf1.0000&\bf1.0000&0.9941&\bf1.0000&0.6415&0.0937&0.0394&\bf1.0000&0.8746&0.8133&0.9148&0.8977&0.8654 \\[0.5ex]
&&25&\bf1.0000&\bf1.0000&\bf1.0000&\bf1.0000&\bf1.0000&\bf1.0000&\bf1.0000&\bf1.0000&\bf1.0000&0.9986&0.9689&\bf1.0000&0.9995&0.9897 \\[0.5ex]
&&50&\bf1.0000&\bf1.0000&\bf1.0000&\bf1.0000&\bf1.0000&\bf1.0000&\bf1.0000&\bf1.0000&\bf1.0000&\bf1.0000&\bf1.0000&\bf1.0000&\bf1.0000&\bf1.0000\\[1.5ex]  
\hline  
$B(1,0.1)$&0.05&10&0.9998&\bf1.0000&\bf1.0000&\bf1.0000&\bf1.0000&\bf1.0000&0.4313&0.2397&0.9984&0.9791&0.9356&0.9981&0.9568&0.9154 \\[0.5ex] 
&&25&\bf1.0000&\bf1.0000&\bf1.0000&\bf1.0000&\bf1.0000&\bf1.0000&\bf1.0000&\bf1.0000&\bf1.0000&\bf1.0000&0.9682&\bf1.0000&0.9891&0.9779 \\[0.5ex]
&&50&\bf1.0000&\bf1.0000&\bf1.0000&\bf1.0000&\bf1.0000&\bf1.0000&\bf1.0000&\bf1.0000&\bf1.0000&\bf1.0000&0.9999&\bf1.0000&\bf1.0000&0.9994 \\[1.5ex] 
\hline   
&0.01&10&0.9985&\bf1.0000&\bf1.0000&0.9311&\bf1.0000&0.2729&0.1568&0.1025&\bf1.0000&0.9184&0.9176&0.9778&0.9544& 0.9423 \\[0.5ex]
&&25&\bf1.0000&\bf1.0000&\bf1.0000&\bf1.0000&\bf1.0000&\bf1.0000&\bf1.0000&\bf1.0000&\bf1.0000&0.9874&0.9668&\bf1.0000&0.9897&0.9680 \\[0.5ex]
&&50&\bf1.0000&\bf1.0000&\bf1.0000&\bf1.0000&\bf1.0000&\bf1.0000&\bf1.0000&\bf1.0000&\bf1.0000&\bf1.0000&0.9986&\bf1.0000&\bf1.0000&0.9980 \\[1.5ex] 
\hline 
$TP(1)$&0.05&10&0.9996&\bf1.0000&\bf1.0000&\bf1.0000&\bf1.0000&\bf1.0000&0.9865&0.9073&\bf1.0000&0.9413&0.9358&0.9997&0.9891& 0.9683 \\[0.5ex] 
&&25&\bf1.0000&\bf1.0000&\bf1.0000&\bf1.0000&\bf1.0000&\bf1.0000&\bf1.0000&\bf1.0000&\bf1.0000&\bf1.0000&0.9789&0.9999&\bf1.0000&0.9983 \\[0.5ex] 
&&50&\bf1.0000&\bf1.0000&\bf1.0000&\bf1.0000&\bf1.0000&\bf1.0000&\bf1.0000&\bf1.0000&\bf1.0000&\bf1.0000&\bf1.0000&\bf1.0000&\bf1.0000&\bf1.0000 \\[1.5ex]     
\hline   
&0.01&10&0.9991&\bf1.0000&\bf1.0000&0.9776&\bf1.0000&0.8659&0.8088&0.6812&\bf1.0000&0.9689&0.9510&0.9784&0.9688& 0.9576 \\[0.5ex]
&&25&\bf1.0000&\bf1.0000&\bf1.0000&\bf1.0000&\bf1.0000&\bf1.0000&\bf1.0000&\bf1.0000&\bf1.0000&0.9990&0.9983&\bf1.0000&0.9998&0.9689 \\[0.5ex] 
&&50&\bf1.0000&\bf1.0000&\bf1.0000&\bf1.0000&\bf1.0000&\bf1.0000&\bf1.0000&\bf1.0000&\bf1.0000&\bf1.0000&\bf1.0000&\bf1.0000&\bf1.0000&\bf1.0000 \\[1.5ex]  
\hline
    \end{tabular}%
    }
  \label{tablep3}%
\end{table}%
\end{landscape}
 
{In addition to empirical type I error and power, we assessed the computational efficiency of the proposed Fourier-based tests relative to existing methods. For each test, the average CPU time per Monte Carlo replication (in seconds) was measured using \texttt{proc.time()} in \texttt{R}. Table \ref{tab:com1} presents a conceptual comparison of all tests for log-symmetric distributions, reporting the underlying principle, type of test statistic, asymptotic null distribution, average computational time, and key computational features. The results indicate that the proposed Fourier-based tests are considerably faster than warp-speed bootstrap and empirical likelihood-based methods, as it relies on closed-form kernel expressions and avoids resampling or constrained optimization, thereby substantially reducing computation time.}

\begin{table}[ht!]
\centering
\caption{{Comparison of the proposed and existing methods for log-symmetric distributions}}
\label{tab:com1}
\resizebox{17cm}{!}{
\begin{tabular}{p{2cm} p{3cm} p{3.1cm}p{2.4cm} p{2.2cm} p{6.5cm}}
\hline
\textbf{Test} & \textbf{Underlying principle} & \textbf{Test statistic type} & \textbf{Asymptotic null dist.} &\textbf{Avg. time per MC repl. (s)} & \textbf{Computational features} \\[1ex]
\hline
$T_{n,a}^{(1)}$
& Weighted characteristic function (CF) approach
& Integrated squared deviation of empirical CF
& Weighted $\chi^2_1$
&0.0512
& Simple to compute; test statistic is computed using closed-form kernel expressions; tuning parameter $a$ controls sensitivity \\[1ex]
\hline
$T_{n,a}^{(2)}$
& Weighted CF approach
& Quadratic functional of empirical CF
& Weighted $\chi^2_1$
& 0.0684 
& Computationally efficient; significantly faster than bootstrap-based procedures for moderate and large samples; good small-sample behavior \\[1ex]
\hline
$\widehat{\Delta}^{J}_{W}$ \citep{Anjana21072025}
& Probability-weighted moment characterization
& $U$-statistic + JEL
& $\chi^2_1$
& 0.2250
& Moderate computational cost; depends on order statistics; fails to attain the nominal significance level for small samples. \\[1ex]
\hline
$\widehat{\Delta}^{J}_{R}$ \citep{avhad2025jackknife}
& Ratios of consecutive order statistics
& $U$-statistic + JEL 
& $\chi^2_1$
& 76.4635
& Computationally intensive due to combinatorial summation; involve higher-order $U$-statistics. \\[1ex]
\hline
 $\widehat{\Delta}^{AJ}_{R}$ \citep{avhad2025jackknife}
& Ratios of consecutive order statistics
& $U$-statistic + AJEL
& $\chi^2_1$
& 78.4635
& Still substantially slower than closed-form test statistics; improves numerical stability but at the cost of extra computation. \\[1ex]
\hline
$\widehat{\Delta}^{S}_{D}$ \citep{avhad2025family}
& Characterization of log-symmetry
& $U$-statistic (asymptotic test)
& Normal 
&15.1342
& Fast; relies on asymptotic critical values; computational cost grows linearly with the number of bootstrap samples. \\[1ex]
\hline
$\widehat{\Delta}^{J}_{D}$ \citep{avhad2025family}
& Characterization of log-symmetry
& $U$-statistic + JEL  
&  $\chi^2_1$
& 26.5776
& Moderate computational cost, improved small-sample accuracy \\[1ex]
\hline
  $\widehat{\Delta}^{AJ}_{D}$ \citep{avhad2025family}
& Characterization of log-symmetry
&$U$-statistic + AJEL
& $\chi^2_1$
& 28.6376
& Compared to JEL, AJEL has a slightly higher computational burden while retaining good small-sample accuracy. \\[1ex]  
\hline
\end{tabular}}
\end{table}

\section{Real data analysis}\label{sec5}
\noindent In this section, we demonstrate the application of each proposed test to real data sets. 
 
\subsection*{Illustration I.}
\noindent We consider a data set representing the breakdown times of an insulating fluid, originally analyzed by Reath {et al}. \cite{reath2018improved} and presented in Table \ref{table4}. To obtain preliminary insights into the underlying distributional structure, Figure \ref{fig:ex1} suggests that the data exhibit patterns consistent with log-symmetric behavior. {The empirical cumulative distribution function (ECDF) plot indicates that the empirical distribution closely follows the fitted log-logistic distribution, suggesting that the log-symmetric model provides an adequate fit to the time-to-breakdown data.} In our study, this data set is employed to assess the log-symmetry of the underlying distribution. 

\begin{table}[!ht]
    \centering
    \caption{The time to breakdown of an insulating fluid}\label{table4} 
    \resizebox{13.5cm}{!}{\fontsize{8pt}{11pt}\selectfont
    \begin{tabular}{c c c c c c c c c c c c c c c c c c c c c c c}
    \hline
     0.96&4.15&0.19&0.78&8.01&31.75&7.35&6.50&8.27&33.91\\
     32.52&3.16&4.85&2.78&4.67&1.31&12.06& 36.71&72.89\\
    \hline
    \end{tabular}}
\end{table}

\begin{figure}[!ht]
    \centering
    \includegraphics[width=16.5cm,height=9cm]{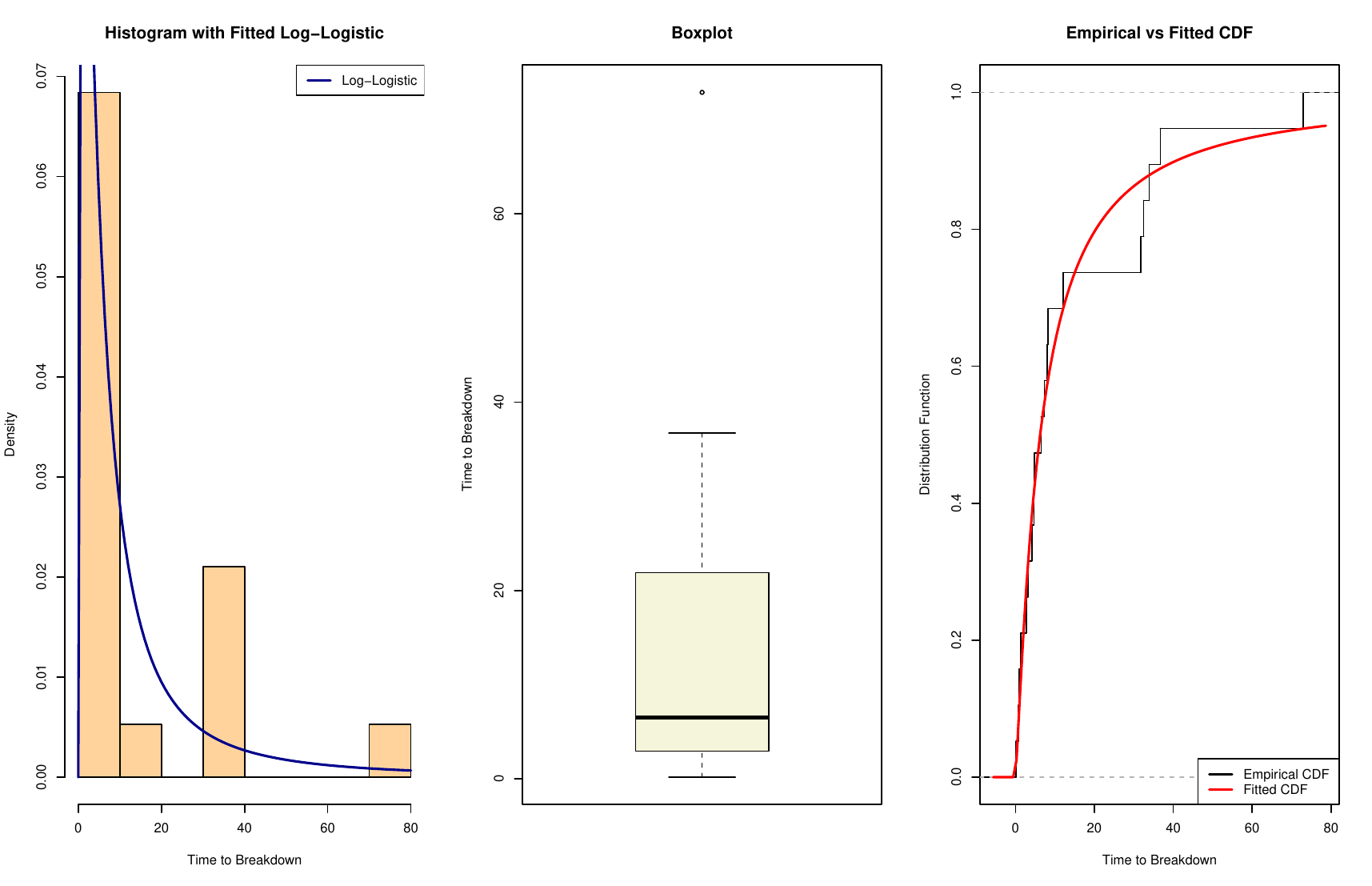}
    \caption{Histogram, boxplot, and empirical distribution function of insulating fluid data set}
    \label{fig:ex1}
\end{figure}
 
\subsection*{Illustration II.}
\noindent The data set, obtained from Qiu and Jia \cite{qiu2018extropy} and presented in Table \ref{table5}, has previously been modeled using an inverse asymmetric Gaussian distribution. In this study, it is used to demonstrate the applicability of the proposed test statistic in assessing distributional log-symmetry. Figure \ref{fig:ex2} shows the corresponding boxplot, providing a visual summary of the skewness and dispersion. {The ECDF plot shows good agreement between the empirical and fitted distribution functions, implying that the log-symmetric assumption is reasonable for the active repair time data.} The proposed tests are then applied to assess whether the distribution exhibits symmetry on the logarithmic scale.

\begin{table}[!ht]
    \centering
    \caption{Active repair times (in hours) for an airborne communication transceiver}\label{table5}
    \resizebox{14 cm}{!}{\fontsize{9pt}{11pt}\selectfont
    \begin{tabular}{p{0.6cm} p{0.6cm} p{0.6cm} p{0.6cm} p{0.6cm}p{0.6cm}  p{0.6cm}p{0.6cm} p{0.6cm} p{0.6cm} p{0.6cm} p{0.6cm} }
    \hline
    0.2& 0.3& 0.5& 0.5& 0.5& 0.5& 0.6& 0.6& 0.7& 0.7& 0.7& 0.8 \\[0.5ex]
    0.8& 1.0& 1.0& 1.0& 1.0& 1.1& 1.3& 1.5& 1.5& 1.5& 1.5& 2.0 \\[0.5ex]
    2.0& 2.2& 2.5& 3.0& 3.0& 3.3& 3.3& 4.0& 4.0& 4.5& 4.7& 5.0\\[0.5ex]
    5.4& 5.4& 7.0& 7.5& 8.8& 9.0& 10.3& 22.0& 24.5  \\
   \hline
   \end{tabular}}
\end{table}

\begin{figure}[!ht]
    \centering
    \includegraphics[width=16.5cm,height=9cm]{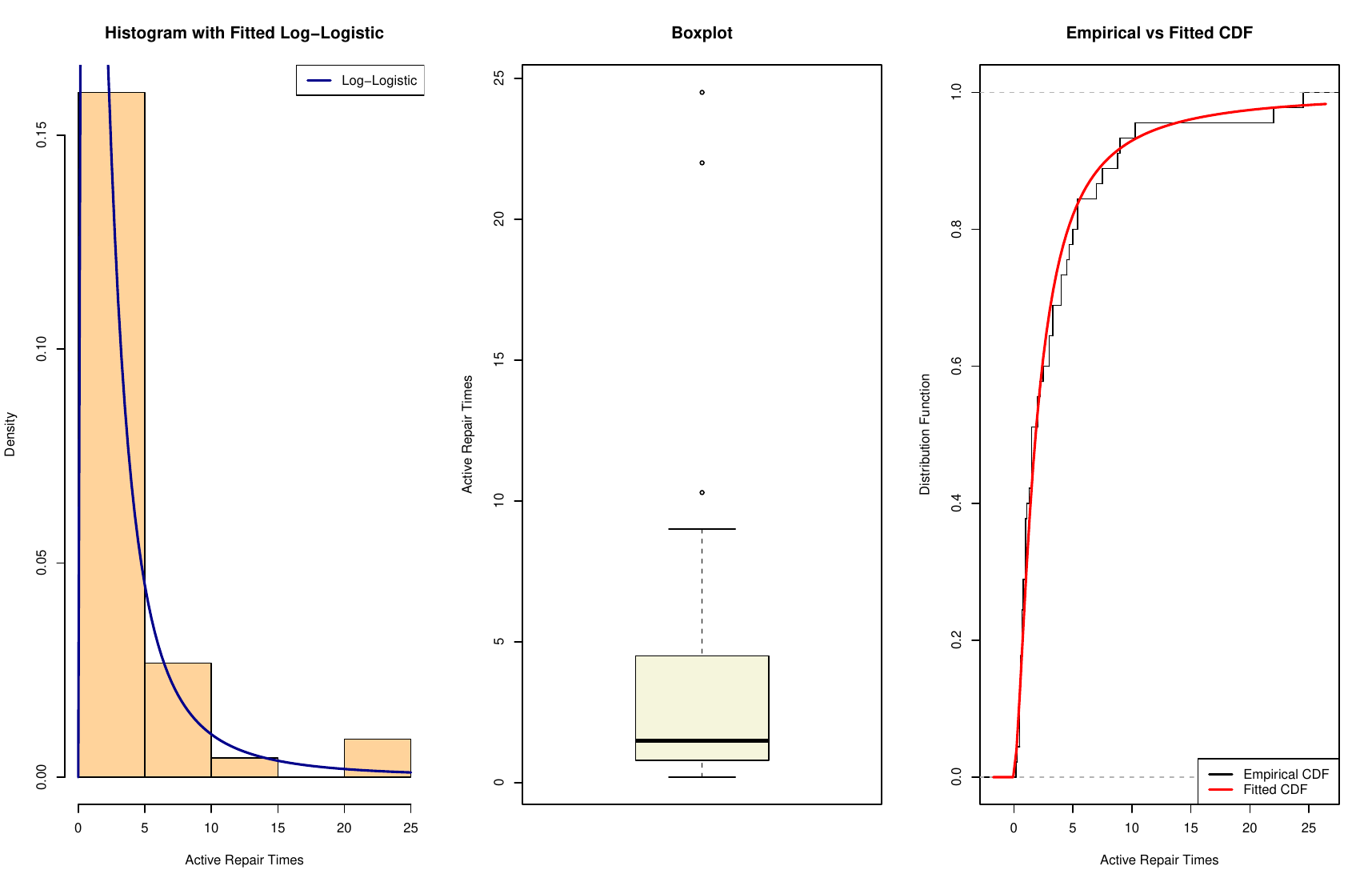}
    \caption{Histogram, boxplot, and empirical distribution function of airborne communication transceiver data set}
    \label{fig:ex2}
\end{figure}

\begin{table}[!ht]
  \centering    
  \caption{$p$-values associated with the various tests considered}
  \resizebox{16.5cm}{!}{
  \fontsize{9pt}{12pt}\selectfont
  \begin{tabular}{cccccccccccccccc}
    \hline
   \ Test &$T_{n,0.5}^{(1)}$ & $T_{n,1}^{(1)}$ & $T_{n,1.5}^{(1)}$ & $T_{n,3}^{(1)}$ &$T_{n,0.5}^{(2)}$ & $T_{n,1}^{(2)}$&$T_{n,1.5}^{(2)}$ & $T_{n,3}^{(2)}$&$\widehat{\Delta}^J_{W}$  & $\widehat{\Delta}^J_{R}$ & $\widehat{\Delta}^{AJ}_{R}$&  $\widehat{\Delta}^S_{D}$&$\widehat{\Delta}^J_{D}$& $\widehat{\Delta}^{AJ}_{D}$ \\[1.5ex]
    \hline
     Illustration 1 &0.979&0.984&0.919&0.325&0.991 &0.927&0.807 &0.473&0.480 &0.993&0.984&1.000&0.998&1.000    \\[1.5ex]  
    \cline{1-15}
    Illustration 2&0.994 &0.988&0.903&0.691&0.994 &0.981& 0.933&0.880 &0.390&0.990&0.996&1.000&1.000&1.000 \\[1.5ex]  
   \hline
  \end{tabular}
  }
  \label{tab:results}
\end{table}

In order to find the critical values for the tests considered, we use warp-speed bootstrap with $B=1000$ replications. To generate the bootstrap samples, the null distribution is considered to be log-logistic. 
Table~\ref{tab:results} reports the $p$-values obtained from the tests described in Section~\ref{sec4}. {The obtained $p$-values show how well the assumed model matches the observed data. Large $p$-values, or test statistics that are smaller than their critical values, mean that the differences between the data and the model are small and can be explained by random variation.} For all the considered procedures, the resulting $p$-values exceed the nominal $5\%$ significance level, indicating insufficient evidence to reject the null hypothesis of log-symmetry. Consequently, the log-symmetric distribution provides a satisfactory and interpretable fit for the observed data in the considered application.


\section{Concluding remarks}\label{sec6} 
\noindent In this paper, based on the characterization introduced by Ahmadi and Balakrishnan \cite{ahmadi2024characterizations}, we proposed a novel weighted $L^2$-type goodness-of-fit tests for the class of log-symmetric distributions, constructed using the empirical characteristic function. The asymptotic behavior of the test statistic is thoroughly investigated. The finite-sample performance of the proposed methods is assessed through extensive Monte Carlo simulations. The results demonstrate that the weighted $L^2$-type tests effectively control the nominal significance level, show consistently high empirical power under various competing alternatives, and offer computational efficiency compared to existing methods. Among all the tests, $T_{n,1}^{(1)}$ is the most effective based on its overall performance. It achieves high power and accurately maintains the nominal significance level, making it suitable for practical use. Finally, the applicability of the proposed methods is illustrated using real-world data sets.

We conclude the paper by outlining several directions for future research. First, based on similar characterizations, more robust and powerful tests for log-symmetric distributions can be developed and systematically compared with existing procedures. Second, since the performance of tests based on the empirical characteristic function depends on the choice of the tuning parameter $a$, a deeper understanding of its role in determining test power remains an important open problem. Third, as the test statistic \eqref{Tna} is weighted and non-linear, the standard empirical likelihood framework is not directly applicable; however, weighted empirical likelihood approaches with alternative weight functions may be explored (see, Glenn and Zhao \cite{glenn2007weighted}). {In addition, natural extensions include multivariate log-symmetry testing using joint characteristic functions or copula-based dependence structures, as well as regression settings incorporating covariates information, such as accelerated failure time and scale-location models.} 



\begin{appendix}
\section{Algorithms}\label{A1}
\begin{algorithm}[H]
\caption{{Computation of the test statistic $T_{n,a}^{(\cdot)}$}}
\label{alg:Tna}
\begin{algorithmic}[1]
\Require Sample $X_1,\ldots,X_n$ from a positive distribution, tuning parameter $a>0$
\Ensure Value of the test statistic $T_{n,a}$

\State Sort the sample to obtain $X_{(1)} \le \cdots \le X_{(n)}$
\State Compute weights
\[
\beta_{m,n} = \left(\frac{n-m}{n-1}\right)^{n-1}, ~~\text{and}~~
\gamma_{m,n} = \left(\frac{m-1}{n-1}\right)^{n-1}, \quad m=1,\ldots,n
\]

\State Initialize $S \gets 0$
\For{$m=1$ to $n$}
  \For{$\ell=1$ to $n$}
    \State Compute kernel terms
    \[
    K_1 = k_a(X_{(m)}-X_{(\ell)}), \quad
    K_2 = k_a(X_{(m)}-X_{(\ell)}^{-1}), ~~\text{and}~~
    K_3 = k_a(X_{(m)}^{-1}-X_{(\ell)}^{-1}),
    \]
    where, $k_a(\cdot)$ denotes the kernel function. In particular,
\[
k_a(t)=\frac{a}{t^2+a^2} \quad \text{(Laplace kernel)},
~~\text{and}~~
k_a(t)=\exp\!\left(-\frac{t^2}{4a}\right) \quad \text{(Gaussian kernel)}.
\]
    \State Update
    \[
    S \gets S + \beta_{m,n} \beta_{\ell,n} K_1 - 2 \gamma_{m,n} \gamma_{\ell,n} K_2 +  \beta_{m,n} \gamma_{\ell,n} K_3
    \]
  \EndFor
\EndFor

\State Set
\[
T_{n,a}^{(\cdot)} =
\begin{cases}
\dfrac{1}{n} S, & \text{for Laplace kernel},\\[1ex]
\sqrt{\dfrac{\pi}{a}}\,\dfrac{1}{n} S, & \text{for Gaussian kernel.}
\end{cases}
\]
\State \Return $T_{n,a}^{(\cdot)}$
\end{algorithmic}
\end{algorithm}

\begin{algorithm}[H]
 \caption{Bootstrap algorithm.}
 \begin{algorithmic}[ht!]
\State \textbf{Input:} Sample size $n$, number of Monte Carlo replications $(MC)$
\State  $\bullet$ Draw a sample of size $n$, say $X_1, \dots, X_n$ is simulated from a distribution.
\State $\bullet$ Compute the test statistic $T = T^{(\cdot)}_{n,a}(X_1, \dots, X_n)$. 
 \For{$j = 1$ to $MC$}
    \State $\bullet$ Generate a bootstrap sample $X_1^*, \dots, X_{n}^*$ from null distributions.
    \State $\bullet$ Compute the test statistic on the bootstrap sample: $T_j^* = T^{(\cdot)}_{n,a}(X_1^*, \dots, X_{n}^*)$.
    \State $\bullet$ Store $T_j = T$.
\EndFor
\State $\bullet$ Compute the empirical power:
  $\dfrac{1}{MC} \sum\limits_{j=1}^{MC} \mathrm{I}(T_j > T_j^*)$.
\end{algorithmic}
\end{algorithm}  
\end{appendix}


\end{document}